\documentclass[letterpaper,12pt]{article}
\usepackage[utf8]{inputenc}
\usepackage[top=3cm, bottom=3cm, left=3cm, right=3cm]{geometry}
\usepackage{graphicx,rotating,amsthm,amsbsy,amssymb,amsmath, hyperref, natbib, bbm, color}
\usepackage{tikz}
\usetikzlibrary{trees,automata,positioning}

\usepackage{caption}
\usepackage{subcaption}

\renewcommand{\P}{\mathbbm{P}}
\newcommand{\1}{\mathbbm{1}}

\newtheorem{definition}{Definition}
\newtheorem{proposition}{Proposition}
\newtheorem{example}{Example}

\title{Differentially Private Hypothesis Testing \\
with the Subsampled and Aggregated  \\ Randomized Response Mechanism}
\author{Víctor Peña, Andrés F. Barrientos \\ 
\textit{Universitat Polit\`ecnica de Catalunya, Florida State University}}

\begin{document}

\maketitle

\begin{abstract}
 Randomized response is one of the oldest and most well-known methods for analyzing confidential data. However, its utility for differentially private hypothesis testing is limited because it cannot achieve high privacy levels and low type I error rates simultaneously. In this article, we show how to overcome this issue with the subsample and aggregate technique. The result is a general-purpose method that can be used for both frequentist and Bayesian testing. {{We illustrate the performance of our proposal in three scenarios: goodness-of-fit testing for linear regression models, nonparametric testing of a location parameter with the Wilcoxon test, and the nonparametric Kruskal-Wallis test. }}
\end{abstract}

\newpage

\section{Introduction}

In this article, we propose a method for testing hypotheses with confidential data. It is conceptually simple, widely applicable, and can attain high privacy levels and low type I error rates at the same time. 

We work within the \textit{differential privacy} framework \citep{dwork2006calibrating}. From a data privacy perspective, differentially private algorithms are appealing because they are robust to deanonymization attacks \citep{dwork2014algorithmic}. From a statistical perspective, differentially private algorithms are useful because they facilitate making inferences from private data.

There is a growing literature on differentially private hypothesis testing. For example, \cite{gaboardi2016differentially} and \cite{rogers2017new} provide differentially private chi-squared tests,  \cite{couch2019differentially} develop differentially private versions of nonparametric tests such as the Mann-Whitney and Kruskal-Wallis tests, and  \cite{barrientos2019differentially}, \cite{pena2021differentially} and \cite{alabi2022hypothesis} propose methods for testing in linear regression models. 

Our proposal is applying the subsample and aggregate technique \citep{nissim2007smooth} to randomized response \citep{warner1965randomized}. The result is a general-purpose algorithm that can create differentially private versions of practically any existing nonprivate hypothesis test. {{Through simulation studies and an application, we find that the method is especially useful when the type I error $\alpha$ of the tests is low. Testing hypothesis with low significance levels (as low as $\alpha = 0.005$) has been proposed as a way to ameliorate what has become known as the \textit{replication crisis}, where published significant results (typically at significance level $\alpha = 0.05$) fail to replicate in subsequent follow-up experiments \citep{benjamin2018redefine}.}}

The subsample and aggregate technique consists in splitting the data into subsets, computing statistics within them, and combining the results in a way that ensures that the output is differentially private. From a theoretical perspective, \cite{smith2011privacy} studies general asymptotic properties of the strategy. From an applied perspective, the subsample and aggregate technique has been used to build differentially private algorithms for clustering \citep{mohan2012gupt, su2016differentially}, feature selection with the LASSO \citep{thakurta2013differentially}, hypothesis testing for normal linear models \citep{barrientos2019differentially, pena2021differentially}, and logistic regression \citep{mohan2012gupt}.

Randomized response was originally motivated as a method for reducing bias in answers to sensitive questions. Since its inception more than fifty years ago, it has been extended and applied to many different contexts; the reader is referred to \cite{blair2015design} or the monograph \cite{chaudhuri2020randomized} for further details. Importantly, randomized response is differentially private \citep{dwork2014algorithmic}. Its properties within the framework have been studied in \cite{wang2016using} and \cite{ma2021randomized}, and it has been used as a building block for differentially private algorithms in \cite{erlingsson2014rappor}, \cite{bassily2015local}, and \cite{ye2019privkv}. 

Unfortunately, randomized response by itself is not useful for differentially private hypothesis testing. As we argue in Section~\ref{sec:prelim}, it cannot achieve acceptable privacy levels and type I error rates simultaneously. Fortunately, this issue can be resolved with the subsample and aggregate technique.

The output of our method is a binary decision that indicates whether we reject a null hypothesis or not. The decision can be used for both frequentist and Bayesian hypothesis testing. For the latter, one has to specify prior probabilities on the hypotheses and a prior distribution on the power of the nonprivate test used for building the private test. 

{{Previous work on differentially private hypothesis testing has focused on releasing differentially private $p$-values or, from a Bayesian perspective, Bayes factors. In contrast, our output is a binary decision. While our output is, in some sense, less informative, it need not be less useful from a practical standpoint. If we perform a hypothesis test, the type I error $\alpha$ must be set in advance. If we are using a $p$-value to make that decision, we should reject the null hypothesis if it is less than $\alpha$. Otherwise, we can be tempted to $p$-hack \citep{gelman2013garden} or misinterpret the $p$-value \citep{schervish1996p}. In our application and simulation studies in Section~\ref{sec:illustrations}, we see that approaches based on binarized outcomes are often more powerful than an approach based on $p$-values. This makes intuitive sense, since a bit (a decision to reject or not reject a null hypothesis) is less informative than a $p$-value.}}

In Section~\ref{sec:prelim}, we define differential privacy and randomized response. In Section~\ref{sec:SARR}, we define the subsampled and aggregated randomized response mechanism, study its  properties, and devise simple strategies to implement it in practice. In Section~\ref{sec:illustrations}, we illustrate the performance of the method in differentially private implementations of the goodness-of-fit tests proposed in \cite{pena2006global}, the one-sample Wilcoxon test, and the Kruskal-Wallis test.  Section~\ref{sec:discussion} closes the article with a brief discussion and ideas for future work. All proofs are relegated to the Supplementary Material to the article. {{The Supplementary Material also includes an additional simulation study comparing our method to the differentially private test for regression coefficients proposed in \cite{barrientos2019differentially}.}}

\section{Preliminaries} \label{sec:prelim}

In this section, we give a brief introduction to differential privacy and randomized response. We state a simplified version of the general definition of differential privacy that is sufficient for our purposes.

Before we define differential privacy, we need to define what neighboring datasets are first.  

\begin{definition}
Let $D = (x_1, x_2, \, ... \, , x_n) \in \{0,1\}^n$ and  $D' = (x'_1, x'_2, \, ... \, , x'_n) \in \{0,1\}^n$. Then, $D$ and $D'$ are neighbors if they differ in only one component: $x_i = x_i'$ for all $i \in \{1,2, \, ... \, , n\}$ except for one $j \in \{1,2, \, ... \, , n\}$ for which $x_j \neq x'_j$.
\end{definition}

Differential privacy bounds the extent to which the output of randomized algorithms can vary for neighboring datasets. In the differential privacy literature, privacy-ensuring randomized algorithms are referred to as \textit{mechanisms}. We formally define differential privacy below.

\begin{definition} \label{def:DP} A mechanism ${M}: \{0,1\}^n \rightarrow \{0,1\}$ is $\varepsilon$-differentially private  if there exists $\varepsilon > 0$ such that for all neighboring $D, D' \in \{0,1\}^n$ 
$$
\max\left\{ \frac{\P[{M}(D) = 1]}{\P[{M}(D') = 1]} , \frac{\P[{M}(D') = 1]}{\P[{M}(D) = 1]}, \frac{\P[{M}(D) = 0]}{\P[{M}(D') = 0]}, \frac{\P[{M}(D') = 0]}{\P[{M}(D) = 0]}  \right\} \le e^{\varepsilon}.
$$
The mechanism is exactly $\varepsilon$-differentially private if the upper bound is tight.
\end{definition}

Low values of $\varepsilon$ are associated with high privacy levels, whereas high values of $\varepsilon$ are associated with low privacy. Figure~\ref{fig:intro} illustrates how $\varepsilon$ restricts $\P[{M}(D) = 1]$ and $\P[{M}(D') = 1]$ for $\varepsilon \in \{0.1, 0.5, 1\}$. As $\varepsilon$ goes to zero,  $\P[{M}(D) = 1]$ and  $\P[{M}(D') = 1]$ are forced to be equal; as $\varepsilon$ goes to infinity, any values of $\P[{M}(D) = 1]$ and $\P[{M}(D') = 1]$ satisfy $\varepsilon$ differential privacy.

\begin{figure}
\includegraphics[width=\linewidth]{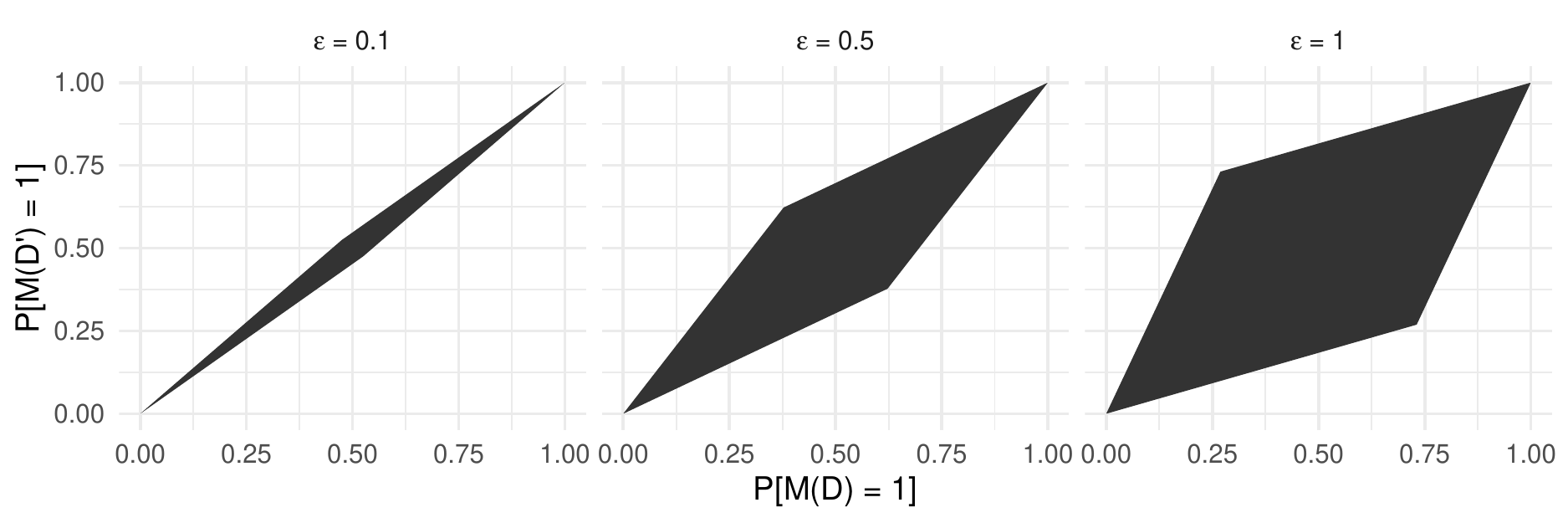}
\caption{Shaded regions show the choices of $\P[{M}(D) = 1]$ and $\P[{M}(D') = 1]$ that achieve $\varepsilon$ differential privacy for $\varepsilon \in \{0.1, 0.5, 1\}$. }
\label{fig:intro}
\end{figure}

A key building block of our method is the randomized response mechanism. It takes a binary input $x \in \{0,1\}$ and outputs
$$
r(x) = \begin{cases}
x, \text{ with probability } p, \\
1-x, \text{ with probability } 1-p, \qquad 1/2 < p < 1.
\end{cases}
$$
In this article, we think of $x$ as the outcome of a hypothesis test, and $x = 1$ implies rejection of a null hypothesis $H_0$.

The proposition below is a well-known fact in the differential privacy literature (see, for example, \cite{dwork2014algorithmic}) and states that $r(x)$ is $\varepsilon$-differentially private.

\begin{proposition}
Let $\varepsilon > 0$ and $p = \exp(\varepsilon)/[1+\exp(\varepsilon)]$. Then, $r(x)$ is exactly $\varepsilon$-differentially private. 
\end{proposition}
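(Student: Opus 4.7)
The plan is a direct computation. Since the mechanism $r$ takes a single bit as input, the only pair of neighboring datasets is $x=0$ and $x'=1$ (and vice versa), so I only need to verify the bound in Definition~\ref{def:DP} for this single pair. I would start by writing down the four relevant probabilities: $\P[r(0)=0] = p$, $\P[r(0)=1] = 1-p$, $\P[r(1)=1] = p$, and $\P[r(1)=0] = 1-p$.

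Next, I would form the four ratios appearing in Definition~\ref{def:DP}. Two of them equal $p/(1-p)$ and the other two equal $(1-p)/p$. Since we are given $p > 1/2$, we have $p/(1-p) > 1 > (1-p)/p$, so the maximum of the four is $p/(1-p)$.

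Finally, I would substitute $p = e^\varepsilon/(1+e^\varepsilon)$, which gives $1-p = 1/(1+e^\varepsilon)$ and hence $p/(1-p) = e^\varepsilon$. This shows that the upper bound in Definition~\ref{def:DP} holds and is attained, which is exactly the definition of \emph{exact} $\varepsilon$-differential privacy.

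There is no real obstacle here; the statement reduces to a two-line calculation once one notes that neighbors in $\{0,1\}^1$ are just the pair $(0,1)$ and that tightness follows because each of the four ratios is computed (not merely bounded) in closed form. The only thing worth being slightly careful about is checking that the maximum in Definition~\ref{def:DP} is indeed $p/(1-p)$ rather than $(1-p)/p$, which is where the hypothesis $p > 1/2$ (equivalently $\varepsilon > 0$) enters.
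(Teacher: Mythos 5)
Your computation is correct and is exactly the ``direct from the definition'' argument that the paper invokes (the paper's own proof simply states the result is standard and omits the calculation). Nothing is missing; identifying $p/(1-p)$ as the binding ratio via $p>1/2$ and checking tightness is all that is required.
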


We would like $r(x)$ to be $\varepsilon$-differentially private and have a type I error rate of at most $\alpha$. Unfortunately, $r(x)$ cannot achieve low values of $\varepsilon$ and $\alpha$ simultaneously. If $x$ is conducted at significance level $0 < \alpha_0 < 1$, the type I error of $r(x)$ is $p \alpha_0 + (1-p)(1-\alpha_0) \ge 1-p$, which is very limiting; for example, if $\varepsilon = 1$, the type I error of $r(x)$ is at least $0.268.$ 

We could control the type I error of $r(x)$ by randomizing it further. That is, we could report $B r(x)$ for  $B \sim \mathrm{Bernoulli}(\varrho)$, where $\varrho$ is set so that $B r(x)$ has type I error $\alpha$. However, the introduction of $B$ comes at the cost of a substantial loss in power. In particular, the power of $B r(x)$ is bounded above by $\varrho$: for instance, if $\varepsilon = 1$, $\alpha_0 = 0.05$, and $\varrho$ is so that the type I error of $r(x)$ is $\alpha = 0.05$, the power of $B r(x)$ is bounded above by $\varrho \approx 0.17$. In Sections~\ref{sec:SARR} and~\ref{sec:illustrations}, we show that subsampling and aggregating provides a more powerful solution.

\section{Subsampled and aggregated randomized response} \label{sec:SARR}

\subsection{General properties}

In this section, we define the subsampled and aggregated randomized response mechanism and study its properties. 

First, we split the data uniformly at random into $2k+1$ disjoint subsets indexed by $i \in \{1, 2, \, ... \, , 2k+1\}$, where $k$ is a nonnegative integer. Within the subsets, we run the nonprivate test of interest at significance level $\alpha_0$. The outcomes of the tests are denoted $x_i$, where $x_i = 1$ indicates rejection of the null hypothesis $H_0$ in the $i-$th subset. Then, we apply independent randomized response mechanisms on the $x_i$, obtaining $r(x_i)$. Finally, we combine the results in $T = \sum_{i =1}^{2k+1} r(x_i)$ and report $d_c = \mathbbm{1}(T > c)$.

The proposition below shows that the privacy level $\varepsilon$ of $d_c$ has a closed-form expression. We derived it using facts about stochastically ordered random variables found in \cite{shaked2007stochastic}. We use the notation $\mathrm{Binomial}(i,p) + \mathrm{Binomial}(j,q)$ for the distribution of the sum of independent $\mathrm{Binomial}(i,p)$ and $\mathrm{Binomial}(j,q)$ random variables, with the understanding that if the number of trials is zero, the random variable is zero with probability one.

 \begin{proposition} \label{prop:epsilon}
The statistic $d_c = \mathbbm{1}( T > c)$ is exactly $\varepsilon$-differentially private with 
$$
\varepsilon =   \log \left( \frac{\P(B_1 > c_\ast) }{\P(B_0 > c_\ast)} \right),
$$
where $c_\ast = \max(c, 2k-c)$ and  $B_i \sim \mathrm{Binomial}(i,p)+ \mathrm{Binomial}(2k+1-i, 1-p)$ for $i \in \{0,1\}$.
\end{proposition}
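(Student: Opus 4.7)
My plan is to identify the worst-case pair of neighboring datasets by conditioning on all randomness except the randomized response of the one subset whose outcome is altered, and then to extract the extremal configuration via stochastic-ordering arguments. Because the subsampling partitions the data into disjoint pieces, neighbors $D,D'$ can differ in the vector $(x_1,\ldots,x_{2k+1})$ in at most one coordinate $j$; without loss of generality take $x_j=0$ under $D$ and $x_j=1$ under $D'$. Setting $T_{-j}=T-r(x_j)$, $A=\P[T_{-j}\ge c]$, and $B=\P[T_{-j}\ge c+1]$, the first two ratios in Definition~\ref{def:DP} collapse to $(pA+(1-p)B)/((1-p)A+pB)$ or its reciprocal, which is a monotone function of $A/B$. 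So maximizing over neighbors reduces to maximizing $A/B$ over all realizations of the remaining $2k$ outcomes.

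If $m'$ of the other $x_i$ equal one, then $T_{-j}\sim\mathrm{Binomial}(m',p)+\mathrm{Binomial}(2k-m',1-p)$. This family is monotone likelihood ratio in $m'$ because MLR propagates through convolution from the single Bernoulli swap $\mathrm{Ber}(1-p)\to\mathrm{Ber}(p)$, and a standard consequence from \cite{shaked2007stochastic} is that under MLR the conditional tail $\P[T_{-j}\ge c+1\mid T_{-j}\ge c]$ is nondecreasing in $m'$; equivalently $A/B$ is nonincreasing in $m'$, so the maximum occurs at $m'=0$. Under that configuration $T\sim B_0$ under $D$ and $T\sim B_1$ under $D'$, producing the ratio $\P[B_1>c]/\P[B_0>c]$. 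Applying the same argument to the two remaining ratios in the definition --- those involving $\P[T\le c]$ --- identifies the worst case at $m'=2k$, and the distributional identity $(2k+1)-B_m\stackrel{d}{=}B_{2k+1-m}$ converts the bound there into $\P[B_1>2k-c]/\P[B_0>2k-c]$.

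To consolidate the two bounds, one more MLR step closes the argument: because $B_1$ and $B_0$ are MLR-ordered, $\P[B_1>t]/\P[B_0>t]$ is nondecreasing in $t$, so $\max\{\P[B_1>c]/\P[B_0>c],\,\P[B_1>2k-c]/\P[B_0>2k-c]\}=\P[B_1>c_\ast]/\P[B_0>c_\ast]$ with $c_\ast=\max(c,2k-c)$. Exactness follows from explicit constructions: when $c_\ast=c$ one takes data such that the nonprivate test does not reject in any subset under $D$ and rejects in exactly one subset under $D'$, realizing $T_D\sim B_0$ and $T_{D'}\sim B_1$; when $c_\ast=2k-c$ one uses the symmetric construction with the roles of $0$ and $1$ swapped. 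The main obstacle I anticipate is packaging the stochastic-ordering steps cleanly --- propagating MLR through the Bernoulli convolutions defining $T_{-j}$, converting MLR into the two different tail-monotonicity statements used above, and checking the distributional symmetry relating the $\le c$ bound to a $>2k-c$ bound --- but all of these should be standard applications of results in \cite{shaked2007stochastic}.
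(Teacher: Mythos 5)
Your proof is correct, and it follows the same skeleton as the paper's --- reduce the four ratios in Definition~\ref{def:DP} to the two candidates $\P(B_1>c)/\P(B_0>c)$ and $\P(B_1>2k-c)/\P(B_0>2k-c)$, then consolidate using the monotonicity of $\P(B_1>t)/\P(B_0>t)$ in $t$ (the paper's Proposition~\ref{prop:decrRx}) --- but it derives the key reduction by a genuinely different route. The paper proves the stronger statement that the consecutive tail ratio $\P(B_i>c)/\P(B_{i-1}>c)$ is decreasing in $i$ over the whole range $i\in\{1,\dots,2k+1\}$ (Proposition~\ref{prop:gendecrBi}), via a mixture decomposition and two separate hazard-ratio inequalities, after which the worst cases $i=1$ and $i=2k+1$ drop out. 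You instead condition on the $2k$ unchanged responses and write that same ratio as an increasing function of $\P[T_{-j}\ge c]/\P[T_{-j}\ge c+1]$, so its monotonicity in the configuration follows in one step from the hazard-rate order of the family $\mathrm{Binomial}(m',p)+\mathrm{Binomial}(2k-m',1-p)$ in $m'$ (which is exactly the content of Proposition~\ref{prop:genhrdom}). The underlying machinery is identical --- the likelihood-ratio order propagated through Bernoulli convolutions and then weakened to the hazard-rate order --- but your conditioning device buys a shorter path: it bypasses the most laborious auxiliary computation in the paper's supplement at the cost of having to treat the $\P[T\le c]$ ratios separately. That is the one step you should spell out in a full write-up: the ``same argument'' there needs the reversed-hazard-rate order rather than the hazard-rate order, or equivalently your reflection identity $(2k+1)-B_m\overset{d}{=}B_{2k+1-m}$ applied before the optimization rather than after. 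With that made explicit, your exactness argument via the realizable extremal configurations completes the proof.
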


Proposition~\ref{prop:epsilon} shows that $\varepsilon$ depends on $c$, $k$, and $p$. We study how these parameters affect $\varepsilon$ by fixing two of them at a time and letting the other one vary. 

\begin{proposition} \label{prop:epstk}
The statistic $d_c = \1(T > c)$ has the following properties:
\begin{enumerate}
    \item For any fixed $k$ and $c$, $\varepsilon$ is increasing in $p$.
    \item For any fixed $p$ and $c \ge k$, $\varepsilon$ is decreasing in $k$. 
    \item For any fixed $k$ and $p$, $\varepsilon$ is minimized at $c = k$. 
\end{enumerate}
\end{proposition}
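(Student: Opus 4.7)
All three claims follow from the identity $\varepsilon = \log \rho(c_\ast)$, where $\rho(t) = \P(B_1 > t)/\P(B_0 > t)$ and $c_\ast = \max(c, 2k-c)$, as in Proposition~\ref{prop:epsilon}. The central fact is the monotone likelihood ratio (MLR) property of $(B_1, B_0)$: conditioning on the single $\mathrm{Bernoulli}(p)$ summand in $B_1$ and using $\binom{2k}{j}/\binom{2k+1}{j} = (2k+1-j)/(2k+1)$ together with $\binom{2k}{j-1}/\binom{2k+1}{j} = j/(2k+1)$, one obtains
$$
\frac{\P(B_1 = j)}{\P(B_0 = j)} \;=\; \frac{(2k+1)(1-p)^2 + j\,(2p-1)}{(2k+1)\, p\, (1-p)},
$$
which is linear and strictly increasing in $j$ for $p > 1/2$. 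Since $\rho(t) = \sum_{j > t}[\P(B_1=j)/\P(B_0=j)]\,\P(B_0=j)/\P(B_0>t)$ is a weighted average of the pointwise ratios over $j > t$, and increasing $t$ removes the smallest value in the support, $\rho$ is non-decreasing. Claim 3 is then immediate: $c_\ast \geq k$ with equality iff $c = k$, so $\varepsilon = \log \rho(c_\ast)$ is minimized at $c = k$.

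\textbf{Claim 2.} For $c \geq k+1$, $c_\ast = c$ for both the $k$- and $(k+1)$-mechanisms. Couple $B_i^{(k+1)} = B_i^{(k)} + W$ with $W \sim \mathrm{Binomial}(2, 1-p)$ independent of $B_i^{(k)}$; additivity of independent binomials makes this coupling consistent with the definition of $B_i^{(k+1)}$. Conditioning on $W$,
$$
e^{\varepsilon^{(k+1)}} \;=\; \frac{\sum_{w=0}^{2} \P(W=w)\, \P(B_1^{(k)} > c-w)}{\sum_{w=0}^{2} \P(W=w)\, \P(B_0^{(k)} > c-w)}
$$
is a convex combination of $\rho^{(k)}(c-w)$ for $w \in \{0,1,2\}$ with non-negative weights proportional to $\P(W=w)\,\P(B_0^{(k)} > c-w)$. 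By monotonicity of $\rho^{(k)}$, each term is at most $\rho^{(k)}(c) = e^{\varepsilon^{(k)}}$, so $\varepsilon^{(k+1)} \leq \varepsilon^{(k)}$.

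\textbf{Claim 1.} Decompose $B_i = Z + Y_i$ with $Z \sim \mathrm{Binomial}(2k, 1-p)$, $Y_1 \sim \mathrm{Bernoulli}(p)$, $Y_0 \sim \mathrm{Bernoulli}(1-p)$, all independent. Conditioning on $Y_i$ gives
$$
e^{\varepsilon} \;=\; \frac{p + (1-p)\, r}{p\, r + (1-p)}, \qquad r \;=\; \frac{\P(Z > c_\ast)}{\P(Z \geq c_\ast)} \in [0,1].
$$
Viewed as a function $g(p,r)$, a short calculation yields $\partial g/\partial p = (1 - r^2)/(pr + 1 - p)^2 \geq 0$ and $\partial g/\partial r = (1 - 2p)/(pr + 1 - p)^2 < 0$ for $p > 1/2$. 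By the chain rule, $d\varepsilon/dp > 0$ reduces to showing that $r$ is non-increasing in $p$. For this, note that
$$
\frac{1}{1 - r} \;=\; \frac{\P(Z \geq c_\ast)}{\P(Z = c_\ast)} \;=\; \sum_{j=c_\ast}^{2k} \frac{\binom{2k}{j}}{\binom{2k}{c_\ast}} \left(\frac{1-p}{p}\right)^{j-c_\ast};
$$
each summand with $j > c_\ast$ is strictly decreasing in $p$ on $(1/2, 1)$, so the sum decreases, $\P(Z = c_\ast)/\P(Z \geq c_\ast)$ increases, and $r$ decreases.

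\textbf{Main obstacle.} Claim 1 is the subtlest because the clean single-variable reformulation in terms of $r$ disguises that $r$ itself depends on $p$, so monotonicity requires the two-step chain-rule argument above. The MLR-to-survival-ratio step used in Claims 2 and 3, though standard, is where the one genuine inequality enters.
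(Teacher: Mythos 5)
Your proof is correct, and although it follows the same skeleton as the paper's---everything is reduced to the closed form $\varepsilon = \log\bigl[\P(B_1 > c_\ast)/\P(B_0 > c_\ast)\bigr]$ from Proposition~\ref{prop:epsilon}, and the three claims hinge on monotonicity of this ratio in the threshold, in $k$, and in $p$---the way you establish each monotonicity is genuinely different. The paper leans on stochastic-order machinery from Shaked and Shanthikumar: likelihood-ratio domination of $B_1$ over $B_0$ via closure of the LR order under convolution with log-concave mass functions, the implied hazard-rate ordering, and, for the $p$- and $k$-monotonicities, the identity $1/(e^\varepsilon-1) = \tfrac{1-p}{2p-1} + \tfrac{1}{2p-1}\,\P(B>c_\ast)/\P(B=c_\ast)$ for $B\sim\mathrm{Binomial}(2k,1-p)$ combined with cited LR-order facts about binomials. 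You instead compute the pointwise ratio $\P(B_1=j)/\P(B_0=j)$ explicitly and observe it is linear and increasing in $j$, get threshold-monotonicity of $\rho$ from the remove-the-smallest-term property of weighted averages, get the $k$-monotonicity from the coupling $B_i^{(k+1)} = B_i^{(k)} + \mathrm{Binomial}(2,1-p)$ plus the same weighted-average trick, and get the $p$-monotonicity from a chain-rule computation on $g(p,r)$ together with an explicit series showing $r$ decreases in $p$. Your $g(p,r)$ is in fact algebraically the same object as the paper's identity, since $r/(1-r) = \P(B>c_\ast)/\P(B=c_\ast)$; what your route buys is self-containedness---no appeal to Theorem 1.C.9 or Example 1.C.51---at the price of a few computations the reader must verify, whereas the paper's route is shorter on the page but outsources the real work to the stochastic-ordering literature. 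One small point worth making explicit in your Claim 2: when $k=0$ and $c=1$ the argument $c-w$ can equal $-1$, in which case $\rho^{(k)}(-1)=1$ and the needed inequality $\rho^{(k)}(c-w)\le\rho^{(k)}(c)$ still holds by the same weighted-average argument (it is just the average over the full support), so nothing breaks.
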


Proposition~\ref{prop:epstk} establishes that subsampling and aggregating lowers $\varepsilon$ whenever $c \ge k$. It also suggests the majority vote $d = \mathbbm{1}(T > k)$ as a default choice of $d_c$, since it minimizes $\varepsilon$ for any fixed $k$ and $p$. For this reason and its intuitive appeal, we restrict our attention to $d$ from this point onward.

\begin{figure}
\begin{subfigure}{.5\textwidth}
  \centering
  \includegraphics[width=\linewidth]{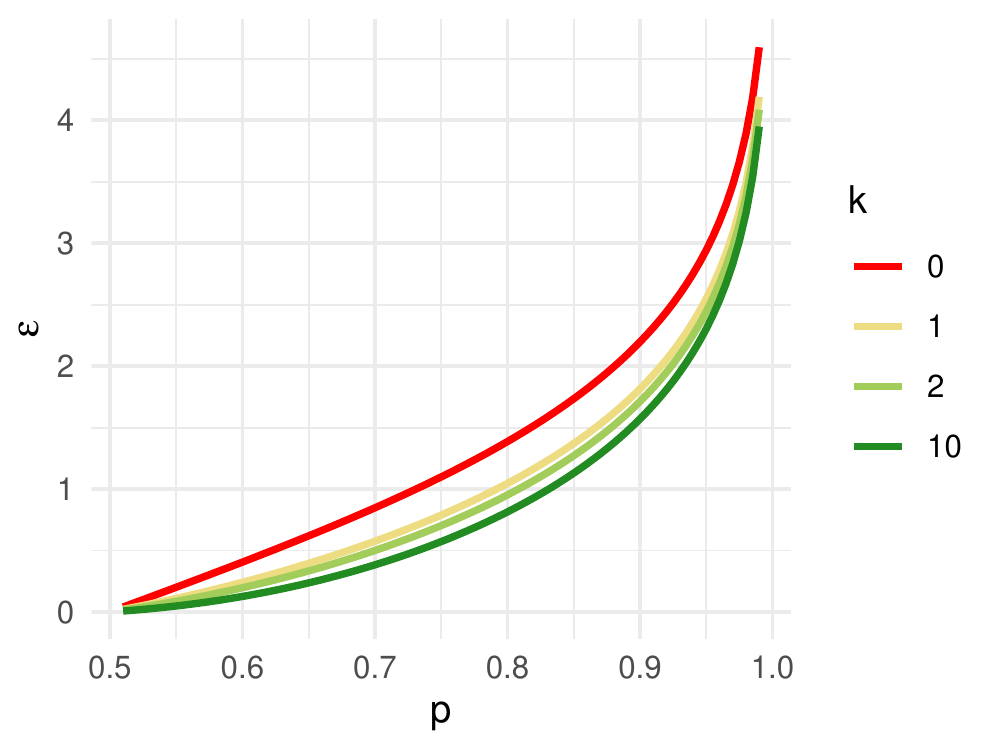}
  \caption{}
  \label{fig:eps}
\end{subfigure}%
\begin{subfigure}{.5\textwidth}
  \centering
  \includegraphics[width=\linewidth]{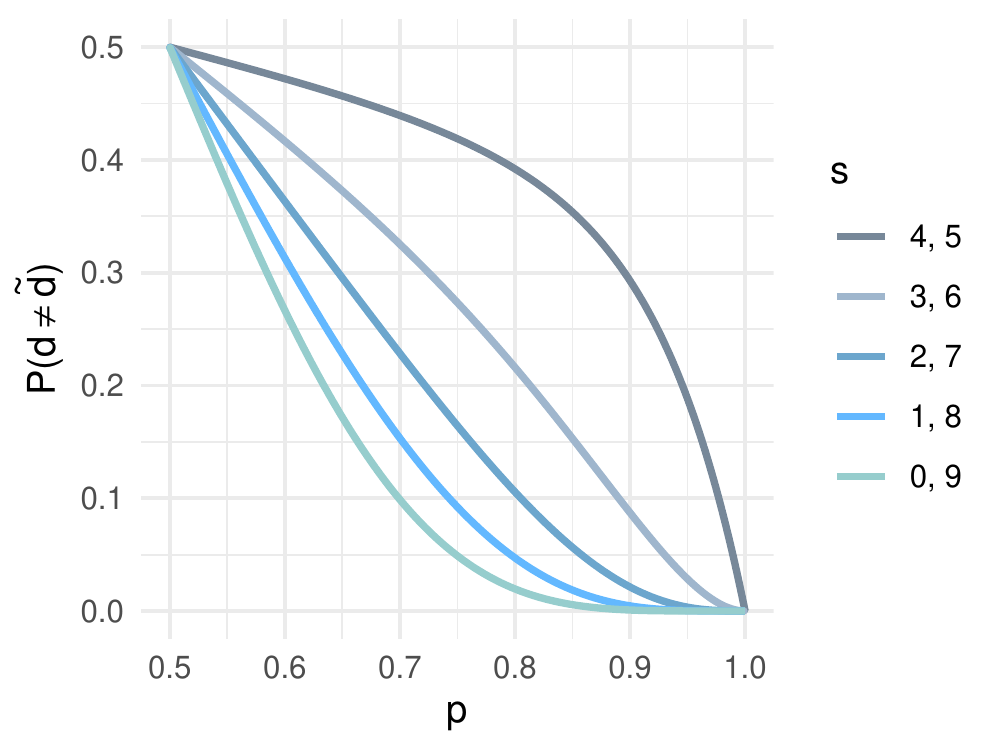}
  \caption{}
  \label{fig:disagree}
\end{subfigure}
\caption{(a) Privacy parameter $\varepsilon$ as a function of $p$ and $k$. (b) $\P(d \neq \tilde{d})$ as a function of $p$ and $T = s$ for $k = 4$.}
\label{fig:fig}
\end{figure}

Figure~\ref{fig:eps} shows $\varepsilon$ given $k$ and $p$. Splitting the data into more and more subsets reduces $\varepsilon$, but the figure seems to indicate that the gains are limited as $k$ increases. The proposition below confirms this intuition: the limit of $\varepsilon$ as $k$ goes to infinity is a positive constant that is increasing in $p$. Combined with Proposition~\ref{prop:epstk}, the limit of $\varepsilon$ as $k$ goes to infinity establishes a nontrivial necessary condition on $p$ for achieving $\varepsilon$ differential privacy.

\begin{proposition} \label{prop:properties}
The statistic $d = \1(T > k)$ has the following properties:
\begin{enumerate}
    \item For any fixed $p$, 
$$
\lim_{k \rightarrow \infty} \varepsilon  = \log\left(1+\frac{(2p-1)^2}{2p(1-p)}\right) > 0.
$$
\item A necessary condition on $p$ for achieving $\varepsilon$ differential privacy is
$$
p \le \frac{1}{2} \left( 1 + \frac{\sqrt{\exp(2 \varepsilon)-1}}{1+\exp(\varepsilon)} \right).
$$
A sufficient condition on $p$ for achieving $\varepsilon$ differential privacy is 
$$
p \le \frac{\exp(\varepsilon)}{1+\exp(\varepsilon)}.
$$
\end{enumerate}
\end{proposition}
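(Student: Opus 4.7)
The strategy is to specialize Proposition~\ref{prop:epsilon} to the majority-vote rule $c = k$, so that $c_\ast = \max(k, 2k-k) = k$ and the task reduces to the ratio $\P(B_1 > k)/\P(B_0 > k)$ with $B_0 \sim \mathrm{Binomial}(2k+1, 1-p)$ and $B_1 \sim \mathrm{Bernoulli}(p) + \mathrm{Binomial}(2k, 1-p)$. Introducing a common $Z \sim \mathrm{Binomial}(2k, 1-p)$ and conditioning on the extra Bernoulli in each case gives
\begin{align*}
\P(B_0 > k) &= \P(Z > k) + (1-p)\,\P(Z = k), \\
\P(B_1 > k) &= \P(Z > k) + p\,\P(Z = k),
\end{align*}
so the ratio equals $(1 + p\rho_k)/(1 + (1-p)\rho_k)$ where $\rho_k = \P(Z = k)/\P(Z > k)$.

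For part~1, it remains to show that $\rho_k \to (2p-1)/(1-p)$ as $k \to \infty$. From the binomial pmf recursion,
$$\frac{\P(Z = k+j)}{\P(Z = k)} = \left(\frac{1-p}{p}\right)^{\!j} \prod_{i=1}^{j} \frac{k-i+1}{k+i},$$
which is bounded above by $[(1-p)/p]^j$ for $1 \le j \le k$ and converges pointwise in $j$ to $[(1-p)/p]^j$. Dominated convergence on the counting measure gives $\P(Z > k)/\P(Z = k) \to (1-p)/(2p-1)$, hence $\rho_k \to (2p-1)/(1-p)$. Plugging in and using the identity $2p^2 - 2p + 1 = (2p-1)^2 + 2p(1-p)$ yields the stated limit, which is strictly positive since $p > 1/2$.

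For part~2, Proposition~\ref{prop:epstk} implies that the privacy parameter of $d$ is decreasing in $k$, so its infimum over $k$ equals the limit from part~1. Hence $\varepsilon$-differential privacy is achievable only if $\log(1 + (2p-1)^2/(2p(1-p))) \le \varepsilon$. Substituting $a = 2p - 1 \in (0,1)$ and clearing denominators reduces this to $a^2(e^{\varepsilon}+1) \le e^{\varepsilon}-1$; taking square roots and using $\sqrt{(e^{\varepsilon}-1)/(e^{\varepsilon}+1)} = \sqrt{e^{2\varepsilon}-1}/(1+e^{\varepsilon})$ delivers the stated bound. For part~3, the argument is subsample-and-aggregate plus post-processing: because the $2k+1$ subsets are disjoint, changing one entry of $D$ alters the input of at most one $r(x_i)$, and the proposition characterizing randomized response guarantees each $r(x_i)$ is $\varepsilon$-differentially private whenever $p \le e^{\varepsilon}/(1+e^{\varepsilon})$; hence the joint vector of responses is $\varepsilon$-differentially private, and so is $d$ by post-processing.

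The main technical hurdle is the convergence of $\rho_k$. The heuristic is that the binomial tail past the mass at $k$ decays like a geometric series with ratio $(1-p)/p$, but formalizing this requires both a uniform upper bound and pointwise convergence in the ratio $\P(Z=k+j)/\P(Z=k)$, for which the pmf recursion identity above is the cleanest route.
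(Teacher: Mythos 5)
Your proof is correct. For part~1 you take essentially the paper's route: the same reduction of $\P(B_1>k)/\P(B_0>k)$ to a function of $\rho_k=\P(Z=k)/\P(Z>k)$ with $Z\sim\mathrm{Binomial}(2k,1-p)$, the same pmf-ratio identity $\P(Z=k+j)/\P(Z=k)=\bigl(\tfrac{1-p}{p}\bigr)^{j}\prod_{i=1}^{j}\tfrac{k-i+1}{k+i}$, and a limit interchange (you invoke dominated convergence where the paper squeezes between the bound $\prod\le 1$ and a monotone-convergence argument on the lower bound; these are interchangeable here). The necessary condition is also obtained exactly as in the paper, by solving the limiting expression for $p$. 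Where you genuinely diverge is the sufficient condition: the paper observes that $p\le e^{\varepsilon}/(1+e^{\varepsilon})$ makes the $k=0$ mechanism (plain randomized response) $\varepsilon$-differentially private and then appeals to $\varepsilon$ being decreasing in $k$, whereas you argue via parallel composition over the disjoint subsets plus post-processing. Your argument is more self-contained --- it does not lean on the monotonicity of the majority vote's $\varepsilon$ in $k$, which Proposition~\ref{prop:epstk} only states for fixed $c$ with $c\ge k$ and which both you and the paper invoke somewhat loosely for the case where $c=k$ varies with $k$ --- and it generalizes immediately to any aggregation rule, not just $\1(T>k)$. The paper's version is shorter given the results already established. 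One cosmetic point: your ``part~3'' is really the second half of part~2 of the proposition, and the post-processing step uses the standard general-output notion of differential privacy rather than the binary-output Definition~\ref{def:DP}; neither affects correctness.
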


There are two sources of uncertainty in $d$: the uncertainty in the $x_i$ and the uncertainty introduced by the randomized response mechanisms. We focus on the latter now, comparing $d = \mathbbm{1}(T > k)$ to
$\tilde{d} = \mathbbm{1}(\sum_{i=1}^{2k+1} x_i > k)$ treating $\sum_{i=1}^{2k+1} x_i$ as fixed.

Given $\sum_{i=1}^{2k+1} x_i = s$, the probability that $d$ is not equal to $\tilde{d}$ is
$$
\mathbbm{P}( d \neq \tilde{d} \mid \textstyle \sum_{i=1}^{2k+1} x_i = s ) = \begin{cases} 
\mathbbm{P}(B_s > k), \text{ if } s \le k, \\
\mathbbm{P}(B_s \le k), \text{ if } s > k,
\end{cases}
$$
where $B_s \sim \mathrm{Binomial}(s, p) + \mathrm{Binomial}(2k +1 - s, 1-p)$. Figure~\ref{fig:disagree} displays this probability as a function of $p$ and $s$ for $k = 4$. There is an interesting symmetry in $s$ that holds in general.

\begin{proposition} \label{prop:symmetry}
For any given $s \in \{0, 1, \, ... \, , k\}$, 
$$\P(d \neq \tilde{d} \mid \textstyle \sum_{i=1}^{2k+1} x_i =  s) = \P(d \neq \tilde{d} \mid \sum_{i=1}^{2k+1} x_i = 2k+1-s).$$ 
\end{proposition}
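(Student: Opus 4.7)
The plan is to reduce the symmetry to a distributional identity for the aggregated statistic $T$ via the flip symmetry of the Binomial distribution. The two cases $s \le k$ and $2k+1 - s \ge k+1$ correspond to opposite values of $\tilde{d}$, so I first peel off what $d \neq \tilde{d}$ means on each side, and then show the two resulting tail probabilities coincide.

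First I would observe that, conditional on $\sum_{i=1}^{2k+1} x_i = s$ with $s \le k$, we have $\tilde{d} = 0$, so $\{d \neq \tilde{d}\} = \{T > k\}$, and on this event $T$ is distributed as $B_s \sim \mathrm{Binomial}(s,p) + \mathrm{Binomial}(2k+1-s, 1-p)$: indeed, the $s$ subsets with $x_i = 1$ each contribute $1$ to $T$ with probability $p$, and the $2k+1-s$ subsets with $x_i = 0$ each contribute $1$ with probability $1-p$, independently. Symmetrically, conditional on $\sum_{i=1}^{2k+1} x_i = 2k+1-s \ge k+1$, we have $\tilde{d} = 1$, so $\{d \neq \tilde{d}\} = \{T \le k\}$, and now $T$ is distributed as $B_{2k+1-s} \sim \mathrm{Binomial}(2k+1-s, p) + \mathrm{Binomial}(s, 1-p)$. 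So the claim reduces to showing
$$
\P(B_s > k) = \P(B_{2k+1-s} \le k).
$$

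The key step is the distributional identity $B_{2k+1-s} \stackrel{d}{=} 2k+1 - B_s$. To see it, write $B_s = Y_1 + Y_2$ with independent $Y_1 \sim \mathrm{Binomial}(s,p)$ and $Y_2 \sim \mathrm{Binomial}(2k+1-s, 1-p)$. Using the elementary fact that $n - \mathrm{Binomial}(n,q) \sim \mathrm{Binomial}(n, 1-q)$, we get $s - Y_1 \sim \mathrm{Binomial}(s, 1-p)$ and $(2k+1-s) - Y_2 \sim \mathrm{Binomial}(2k+1-s, p)$, and these two are independent. Their sum is $2k+1 - B_s$, and by construction it has the distribution of $B_{2k+1-s}$.

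Given this identity, $\P(B_{2k+1-s} \le k) = \P(2k+1 - B_s \le k) = \P(B_s \ge k+1) = \P(B_s > k)$, which closes the argument. There is no serious obstacle here; the only thing to be careful about is keeping track of which case has $\tilde{d} = 0$ versus $\tilde{d} = 1$, since the event $\{d \neq \tilde{d}\}$ flips between the two tails of $T$ on the two sides of the claimed identity, and it is precisely this flip that matches the $B_s \leftrightarrow 2k+1-B_s$ reflection.
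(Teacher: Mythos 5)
Your proof is correct and follows essentially the same route as the paper: the paper's proof simply asserts the key identity $\P(B_s > k) = \P(B_{2k+1-s} \le k)$ for the two mixed binomials, which is exactly what you establish. You merely fill in the details the paper omits, namely the case analysis on $\tilde{d}$ and the reflection $B_{2k+1-s} \stackrel{d}{=} 2k+1 - B_s$.
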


The probability that $d$ and $\tilde{d}$ disagree depends on $s, k$, and $p$. Below, we describe this dependence.

\begin{proposition} \label{prop:disagree}
The probability $\mathbbm{P}( d \neq \tilde{d} \mid \textstyle \sum_{i=1}^{2k+1} x_i = s )$ has the following properties: \label{prop:disagree} 
\begin{enumerate}
    \item For any fixed $k$ and $p$, $ \mathbbm{P}( d \neq \tilde{d} \mid \textstyle \sum_{i=1}^{2k+1} x_i = s )  $ is  decreasing in $s$ if $s > k$ and increasing in $s$ if $s \le k$.
    \item For any fixed $p$ and $s$, $\mathbbm{P}( d \neq \tilde{d} \mid \textstyle \sum_{i=1}^{2k+1} x_i = s ) $ is {decreasing} in $k$ if $s \le k$ and increasing in $k$ if $s > k$.
    \item For any fixed $k$ and $s$,  $\mathbbm{P}( d \neq \tilde{d} \mid \textstyle \sum_{i=1}^{2k+1} x_i = s )  = 1/2$ if $p = 1/2$ and $\mathbbm{P}( d \neq \tilde{d} \mid \textstyle \sum_{i=1}^{2k+1} x_i = s )  = 0$ if $p = 1$.
\end{enumerate}
\end{proposition}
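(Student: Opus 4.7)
All three parts are monotonicity statements about the disagreement probability, so I would prove them using coupling (parts 1 and 3) and a single pmf inequality (part 2).

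For part 1, I would construct $B_s^{(k)}$ and $B_{s+1}^{(k)}$ on a common probability space using the same bank of $2k+1$ coins, where the only change is that one coin is distributed as $\mathrm{Bernoulli}(1-p)$ in the first version and $\mathrm{Bernoulli}(p)$ in the second. Since $p > 1/2$, these two Bernoullis can be coupled pointwise so the second dominates the first, giving $B_{s+1}^{(k)} \ge B_s^{(k)}$ almost surely. Both sub-claims follow at once, taking a complement in the $s > k$ case.

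For part 3, I would just substitute. At $p = 1/2$, $B_s^{(k)} \sim \mathrm{Binomial}(2k+1, 1/2)$ is symmetric about the non-integer $k + 1/2$, so $\P(B_s^{(k)} > k) = \P(B_s^{(k)} \le k) = 1/2$. At $p = 1$, $B_s^{(k)} = s$ almost surely, so the disagreement probability is zero whether $s \le k$ or $s > k$.

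Part 2 is the main obstacle. Both sub-cases reduce to the single claim $\P(B_s^{(k+1)} > k+1) \le \P(B_s^{(k)} > k)$ for every $s$, since the two formulas defining the disagreement probability are complementary. Writing $B_s^{(k+1)} = B_s^{(k)} + Y_1 + Y_2$ with $Y_1, Y_2 \sim \mathrm{Bernoulli}(1-p)$ independent and conditioning on $Y_1 + Y_2 \in \{0,1,2\}$ yields, after a short calculation,
$$\P(B_s^{(k+1)} > k+1) - \P(B_s^{(k)} > k) = (1-p)^2 \P(B_s^{(k)} = k) - p^2 \P(B_s^{(k)} = k+1),$$
so everything reduces to the pmf inequality $p^2 \P(B_s^{(k)} = k+1) \ge (1-p)^2 \P(B_s^{(k)} = k)$, which is the key technical step. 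I would prove it via a bijection between pairs $(\omega, i)$ with $\omega \in \{0,1\}^{2k+1}$, $|\omega|=k$ and $i$ a tail position of $\omega$, and pairs $(\omega', j)$ with $|\omega'| = k+1$ and $j$ a head position of $\omega'$, the bijection being the single tail-to-head flip at position $i$. Flipping a ``positive-type'' tail multiplies the probability weight by $p/(1-p)$, while flipping a ``negative-type'' tail multiplies it by $(1-p)/p$, so summing the weights over the bijection yields the identity
$$(k+1)\,p(1-p)\,\P(B_s^{(k)} = k+1) = \bigl[p^2 s + (1-p)^2 (k+1-s)\bigr] \P(B_s^{(k)} = k) + \bigl[(1-p)^2 - p^2\bigr] E\bigl[a\cdot\mathbbm{1}(B_s^{(k)}=k)\bigr],$$
where $a$ denotes the number of positive coins showing heads. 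Since $(1-p)^2 - p^2 < 0$ and $a \le s$, the crude bound $E[a \cdot \mathbbm{1}(B_s^{(k)}=k)] \le s\,\P(B_s^{(k)}=k)$ collapses the right-hand side to $(1-p)^2(k+1)\,\P(B_s^{(k)}=k)$, which gives $\P(B_s^{(k)}=k+1)/\P(B_s^{(k)}=k) \ge (1-p)/p \ge (1-p)^2/p^2$, the desired inequality.
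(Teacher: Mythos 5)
Your proof is correct, and for part 2 it takes a genuinely different---and in fact more careful---route than the paper. For parts 1 and 3 you and the paper do essentially the same thing: where the paper cites the likelihood-ratio ordering of the $B_s$ (its Proposition A.1, which implies usual stochastic dominance), you build an explicit pointwise coupling, and part 3 is direct substitution in both cases. The interesting divergence is part 2. Passing from $k$ to $k+1$ adds \emph{two} subsets ($2k+1 \to 2k+3$), and your decomposition $B_s^{(k+1)} = B_s^{(k)} + Y_1 + Y_2$, with the resulting exact difference $(1-p)^2\P(B_s^{(k)}=k) - p^2\P(B_s^{(k)}=k+1)$, accounts for this correctly; the paper instead writes $B^\ast_s \sim B_s + \mathrm{Bernoulli}(1-p)$, i.e., adds only one trial while raising the threshold by one, so its one-line convex-combination bound does not literally apply to the statistic in question. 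Your reduction to the pmf inequality $p^2\P(B_s=k+1) \ge (1-p)^2\P(B_s=k)$ is the step that is genuinely needed (it uses $p>1/2$ in an essential way), and your bijective tail-to-head argument for it checks out, including the identity and the crude bound $E[a\,\1(B_s=k)] \le s\,\P(B_s=k)$. That said, the bijection is heavier than necessary given the paper's own machinery: by Proposition A.1 the ratio $\P(B_s=k+1)/\P(B_s=k)$ is increasing in $s$, and at $s=0$ it equals $(1-p)/p$ by direct computation with $\mathrm{Binomial}(2k+1,1-p)$, which already yields $\P(B_s=k+1)/\P(B_s=k) \ge (1-p)/p \ge (1-p)^2/p^2$. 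What your approach buys is a self-contained elementary argument and, incidentally, a correction of the off-by-one in the paper's own proof of this part.
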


A direct consequence of Propositions~\ref{prop:symmetry} and \ref{prop:disagree} is that the probability that $d$ and $\tilde{d}$ disagree is minimized when $\textstyle \sum_{i=1}^{2k+1} x_i \in \{0, 2k+1\}$ and maximized when $\textstyle \sum_{i=1}^{2k+1} x_i \in \{k, k+1\}$.

\subsection{Hypothesis testing}
\label{sec:hyptest}

In this section, we focus on properties related to hypothesis testing. For simplicity, we assume that the subsets are balanced (i.e., they have the same sample size), but the method can be used provided that the subsets are not heavily unbalanced.  Throughout, we assume that the tests behind the $x_i$ are all conducted at a fixed significance level $\alpha_0$.

The type I error and power of $d$ depend on the probability that the nonprivate test $x_i$ rejects $H_0$, which we denote $\gamma_0$. Under $H_0$, $\gamma_0$ is the type I error $\alpha_0 = \P(x_i = 1 \mid H_0)$; under $H_1$, it is the power $\P(x_i = 1 \mid H_1)$. 

Let $T = \sum_{i=1}^{2k+1} r(x_i) \sim \mathrm{Binomial}(2k+1,  p \gamma_0 + (1-p) (1-\gamma_0))$ be the number of subsets where $H_0$ is rejected. Since $d = \1(T > k)$, the distribution of $T$ lets us quantify how the probability of rejecting $H_0$ depends on $k$, $p$, and $\gamma_0$. 

\begin{proposition} \label{prop:fpr} The probability that $d$ rejects $H_0$ has the following properties:  
\begin{enumerate}
    \item For any fixed $k$ and $p$,  the probability that $d$ rejects $H_0$ is increasing in $\gamma_0$.
    \item For any fixed $\gamma_0$ and $k$, the probability that $d$ rejects $H_0$ is decreasing in $p$ if $\gamma_0 < 1/2$ and increasing in $p$ if $\gamma_0 > 1/2$.
    \item Let $p > 1/2$ be fixed. If $\gamma_0 > 1/2$, then the probability that $d$ rejects $H_0$ goes to 1 as $k \rightarrow \infty$. Alternatively, if $\gamma_0 < 1/2$, then the probability that $d$ rejects $H_0$ goes to 0 as $k \rightarrow \infty$.
\end{enumerate}
\end{proposition}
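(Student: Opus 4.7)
The plan is to reparametrize everything through the single quantity $q = \gamma_0 p + (1-\gamma_0)(1-p) = (1-p) + (2p-1)\gamma_0$, which is the per-subset probability that $r(x_i) = 1$. Since the $r(x_i)$ are independent and identically distributed Bernoulli$(q)$, we have $T \sim \mathrm{Binomial}(2k+1, q)$, and the rejection probability $\P(d=1) = \P(T > k)$ is a function of $(q,k)$ alone. The main workhorse will be the standard stochastic-ordering fact that for fixed $n$ and $c$ with $0 \le c < n$, the map $q \mapsto \P(\mathrm{Binomial}(n,q) > c)$ is strictly increasing on $(0,1)$; this follows from the usual coupling via uniform random variables.

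For part 1, I would differentiate: $\partial q/\partial \gamma_0 = 2p - 1 > 0$ since $p > 1/2$, so $q$ is strictly increasing in $\gamma_0$, and monotonicity of the binomial tail in its success parameter delivers the claim. For part 2, the identical calculation gives $\partial q/\partial p = 2\gamma_0 - 1$, whose sign is positive when $\gamma_0 > 1/2$ and negative when $\gamma_0 < 1/2$; invoking the same stochastic-ordering fact yields the stated monotonicity in $p$ in each case.

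For part 3, with $p > 1/2$ fixed, a short computation shows that $q - 1/2 = (2p-1)(\gamma_0 - 1/2)$, so $q > 1/2$ if $\gamma_0 > 1/2$ and $q < 1/2$ if $\gamma_0 < 1/2$. The plan is to apply the weak law of large numbers to $T/(2k+1)$, which converges in probability to $q$. Writing $\P(T > k) = \P\bigl(T/(2k+1) > k/(2k+1)\bigr)$ and noting $k/(2k+1) \to 1/2$ monotonically from below, the conclusion follows: in the case $\gamma_0 > 1/2$ the threshold sits strictly below the limit $q$ eventually, forcing the probability to $1$; in the case $\gamma_0 < 1/2$ the threshold sits strictly above $q$ eventually, forcing the probability to $0$.

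I do not anticipate a serious obstacle. The only minor care needed is to justify that the binomial tail is \emph{strictly} monotone in $q$ (so that the conclusions of parts 1 and 2 are substantive, not vacuous), which the coupling argument supplies, and to make sure the LLN argument in part 3 handles the moving threshold $k/(2k+1)$, which is immediate since the limit $1/2$ is bounded away from $q$.
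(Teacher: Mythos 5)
Your proposal is correct and follows essentially the same route as the paper: both reduce everything to the fact that $T \sim \mathrm{Binomial}(2k+1, q)$ with $q = p\gamma_0 + (1-p)(1-\gamma_0)$, establish parts 1 and 2 by checking the sign of the derivative of $q$ in the relevant variable together with monotonicity of the binomial tail in its success probability, and obtain part 3 from concentration of $T/(2k+1)$ around $q$ (the paper cites standard tail bounds where you invoke the weak law of large numbers, which is the same idea). Your write-up is somewhat more explicit about the strict monotonicity of the tail and the moving threshold $k/(2k+1)$, but there is no substantive difference.
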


Part 3 of Proposition~\ref{prop:fpr} establishes that $d$ is consistent under $H_1$ as long as the power of the tests within the subsets is greater than $1/2$ as $k$ goes to infinity.
 
The type I error $\alpha$ of $d$ depends on $k$, $p$, and $\alpha_0$. By Proposition~\ref{def:DP}, $\varepsilon$ does not depend on $\alpha_0$, so lowering $\alpha_0$ decreases $\alpha$ without sacrificing $\varepsilon$.  In Proposition~\ref{prop:properties}, we saw that $\varepsilon$ is decreasing in $k$, but the gains are limited. This is not the case for $\alpha$, in the sense that the minimum $\alpha$ attainable as $k$ grows to infinity is zero. However, we should bear in mind that reducing $\alpha_0$ will decrease power.

\begin{proposition} \label{prop:minalpha}
For any $\varepsilon > 0$, the minimum type I error $\alpha$ attainable by $d$ goes to zero as $k$ goes to infinity.
\end{proposition}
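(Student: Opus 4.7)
The plan is to exhibit, for each $k$, specific choices of $p$ and $\alpha_0$ yielding an $\varepsilon$-differentially private test whose type I error tends to zero as $k \to \infty$. Since the minimum attainable $\alpha$ is bounded above by any such explicit construction, this will suffice.

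First, I would fix $p = \exp(\varepsilon)/(1+\exp(\varepsilon))$, which by the sufficient condition in part~2 of Proposition~\ref{prop:properties} guarantees that $d$ is $\varepsilon$-differentially private for every $k$. Next, I would choose any constant $\alpha_0 \in (0, 1/2)$. Under $H_0$, the responses $r(x_i)$ are i.i.d.\ $\mathrm{Bernoulli}(q)$ with
\[
q = p\alpha_0 + (1-p)(1-\alpha_0) = (1-p) + (2p-1)\alpha_0,
\]
so $T \sim \mathrm{Binomial}(2k+1, q)$. A short algebraic manipulation gives
\[
\tfrac{1}{2} - q = (2p-1)\bigl(\tfrac{1}{2} - \alpha_0\bigr) =: c > 0,
\]
so $q$ is a fixed constant strictly less than $1/2$, independent of $k$.

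The type I error is $\alpha = \P(T > k)$. Since $k - (2k+1)q = (2k+1)c - 1/2$, which is at least $(2k+1)c/2$ for all sufficiently large $k$, Hoeffding's inequality yields
\[
\P(T > k) \le \exp\!\left(-\frac{2\bigl((2k+1)c/2\bigr)^2}{2k+1}\right) = \exp\!\left(-\frac{(2k+1)c^2}{2}\right) \longrightarrow 0,
\]
which gives the claim.

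The only step requiring any real insight is using the sufficient condition from Proposition~\ref{prop:properties} to pin down a value of $p$ uniformly in $k$ while still ensuring $\varepsilon$-differential privacy; this is what prevents $q$ from drifting up to $1/2$ as $k$ grows. Once $p$ and $\alpha_0$ are fixed so that $q < 1/2$ by a constant margin, the rest is a routine binomial concentration calculation, and I do not anticipate any substantial technical obstacle.
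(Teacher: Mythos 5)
Your proof is correct and takes essentially the same route as the paper: both upper-bound the minimum attainable type I error by an explicit feasible configuration and reduce the claim to a $\mathrm{Binomial}(2k+1,q)$ tail probability with $q$ bounded strictly below $1/2$. The paper simply takes $\alpha_0=0$ (so $q=1-p$) and notes the tail vanishes, whereas you fix $p=\exp(\varepsilon)/(1+\exp(\varepsilon))$ via the sufficient condition and invoke Hoeffding explicitly --- a slightly more detailed write-up of the same idea, which has the minor virtue of making explicit why $q$ stays bounded away from $1/2$ uniformly in $k$.
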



\subsection{Tuning parameters of the mechanism} \label{sec:tuning}

\begin{figure}
\begin{subfigure}{.5\textwidth}
  \centering
  \includegraphics[width=\linewidth]{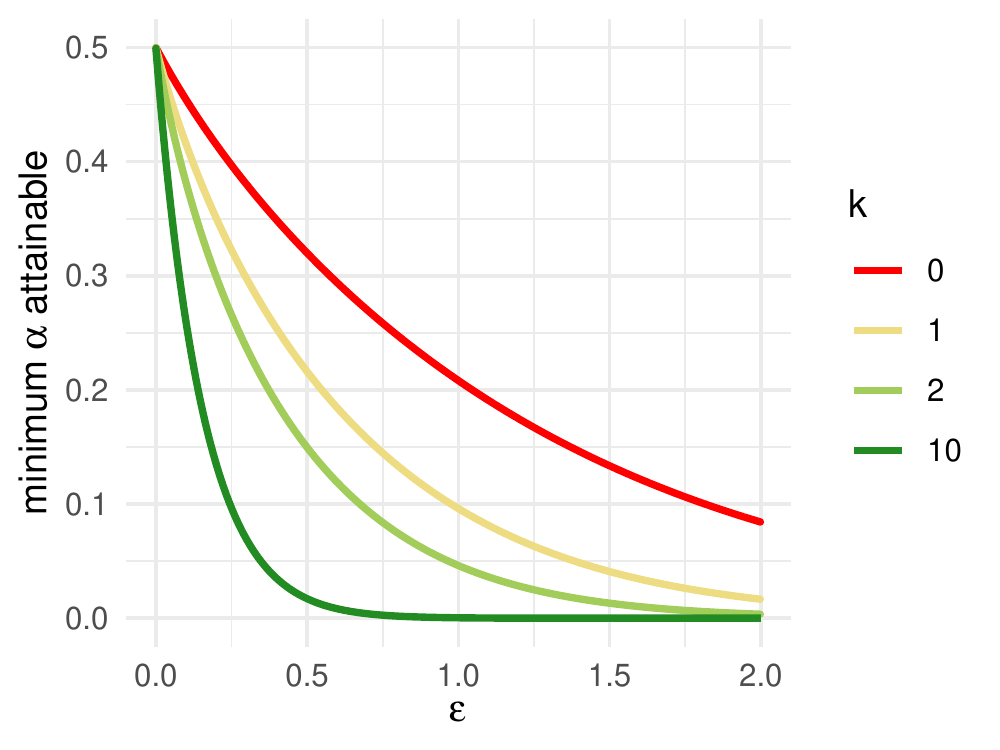}
  \caption{}
  \label{fig:minalpha}
\end{subfigure}%
\begin{subfigure}{.5\textwidth}
  \centering
  \includegraphics[width=\linewidth]{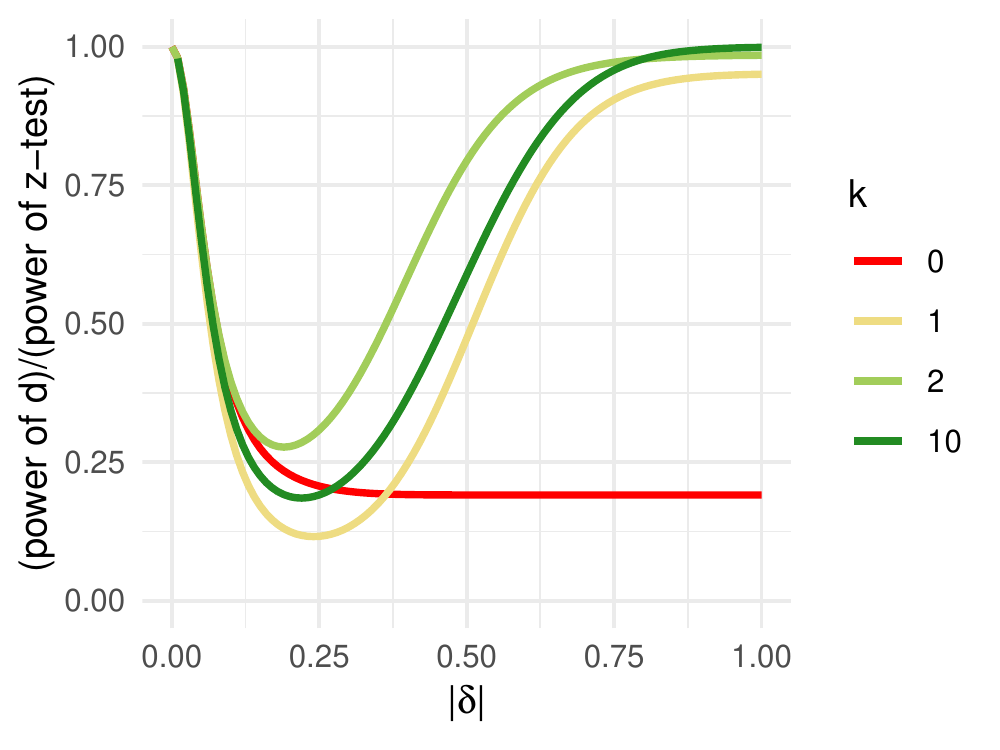}
  \caption{}
  \label{fig:powztest}
\end{subfigure}
\caption{(a) Minimum $\alpha$ attainable as a function of $\varepsilon$ and 
$k$. (b)  Difference in power between $d$ and $z$-test as a function of effect size $|\delta|$ and $k$ for $n = 105$.}
\label{fig:fig}
\end{figure}

We propose two strategies for choosing $k$, $p$, and $\alpha_0$. In both cases, we set $p$ given $k$ so that $d$ is exactly $\varepsilon$-differentially private. Once $k$ and $p$ are fixed, we find an $\alpha_0$ such that $d$ has type I error $\alpha$.

The first strategy is inspecting power curves for values of $k$ on a grid. For each $k$, we determine if there are $p$ and $\alpha_0$ such that $d$ is $\varepsilon$-differentially private and has type I error $\alpha$. If such $p$ and $\alpha_0$ exist, we can find a power curve. In some cases, power curves have closed-form expressions; in others, they can be simulated. After finding power curves for all the values of $k$ on the grid, we can choose a value visually. 

The second strategy is a heuristic tailored to the scenario where the tests are low-powered. When the tests are low-powered, increasing the number of subsets typically decreases the power of $d$. With that in mind, we propose setting $k$ to the minimum value $k^\ast$ for which there exist $p$ and $\alpha_0$ such that $d$ is $\varepsilon$-differentially private and has type I error $\alpha$. To avoid $\alpha_0$ that are too small, we add the restriction $\alpha_0 \ge \alpha_{0, \min}$ for a user-defined $\alpha_{0, \min} > 0$.

\begin{table}[h]
\centering
\caption{Minimum $k$ needed for different combinations of $\alpha$ and $\varepsilon$.}
\begin{tabular}{r|r|r|r|r|r}
\hline
  min $k$  & {$\varepsilon = 0.5$} & $\varepsilon = 0.75$ &  $\varepsilon = 1$ &  $\varepsilon = 1.25$ & $\varepsilon = 1.5$ \\
\hline
$ \alpha = 0.005$ & 13 & 8 & 6 & 4 & 3\\
\hline
$\alpha = 0.01$ & 11 & 7 & 5 & 4 & 3\\
\hline
$\alpha = 0.05$ & 6 & 4 & 3 & 2 & 1\\
\hline
$\alpha = 0.1$ & 4 & 2 & 2 & 1 & 1\\
\hline
\end{tabular}
\label{tab:minalpha}
\end{table}

{{Proposition~\ref{prop:minalpha}} guarantees that, for $k$ large enough, we can achieve any privacy level $\varepsilon$ and type I error $\alpha$ simultaneously. However, if we want both $\varepsilon$ and $\alpha$ to be small and we have a small sample size, we may not be able to split the data into enough subsets to satisfy both requirements. Table~\ref{tab:minalpha} shows the minimum $k$ needed to simultaneously achieve type I error $\alpha$ and $\varepsilon$ differential privacy for $\alpha \in \{0.005, 0.01, 0.05, 0.1\}$ and $\varepsilon \in \{0.5, 0.75, 1, 1.25, 1.5\}$. The lower $\alpha$ and $\varepsilon$ are, the larger $k$ needs to be to simultaneously achieve $\alpha$ type I error level and $\varepsilon$ differential privacy.}

Below, we apply the two strategies we described to the one sample $z$-test. The example illustrates the need for setting $\alpha_{0, \min} > 0$: without a nonzero minimum, $\alpha_{0, \min}$ can be too small and the mechanism can be underpowered.

\begin{example} [One sample $z$-test] \label{ex:ztest} Let the data be 105 independent and identically distributed observations distributed as $\mathrm{Normal}(\mu, \sigma^2)$ with $\sigma^2$ known. We set $\varepsilon = 1.5$ and test $H_0: \mu = \mu_0$ against $H_1 \neq \mu_0$ with the one-sample $z$-test at significance level $\alpha = 0.05$. In this example, the classical randomized response mechanism ($k = 0$) cannot achieve $\varepsilon$-differential privacy and type I error $\alpha$ at the same time. To solve the issue, we let $p = \exp(\varepsilon)/[1+\exp(\varepsilon)]$ and define $B r(x)$, where $B \sim \mathrm{Bernoulli}(\varrho)$. The probability $\varrho$ is set so that $B r(x)$ has type I error $\alpha$.
Figure~\ref{fig:powztest} shows the ratio of the  power of $d$ to the power of the usual nonprivate $z$-test for $k \in \{0, 1, 2, 10\}$ and effect sizes $|\delta| = |\mu-\mu_0|/\sigma$ ranging from 0 to 1. The classical randomized response mechanism $(k = 0)$ and $k = 1$ do not perform well. In the latter case, the method has low power because $\alpha_0 \approx 0.0025$. The performance of $k = 2$ and $k = 10$ is similar for small effect sizes. For moderate effect sizes, $k = 2$ is preferable. For large effect sizes, $k = 10$ ends up outperforming $k = 2$, but at that point both approaches are essentially as powerful as the nonprivate $z$-test. The values of $\alpha_0$ are 0.089 for $k = 2$ and 0.281 for $k = 10$, respectively. If we use the automatic strategy to select the parameters of the mechanism with $\alpha_{0, \min} = 0$, it selects $k^\ast = 1$; if we set a minimum $\alpha_{0, \min} > 0.0025$, it chooses $k^\ast = 2$ instead.
\end{example}

\subsection{Extensions: multiple hypotheses and Bayesian testing}\label{sec:extensions}

The subsampled and aggregated randomized response mechanism can be used for testing multiple hypotheses. Indeed, it is straightforward to apply a Bonferroni correction to multiple independent runs of the mechanism. If each test is $\varepsilon$-differentially private, the vector $(d_1, d_2, \, ... \, , d_m)$ is $m \varepsilon$-differentially private by the sequential composition property of differential privacy \citep{mcsherry2009privacy}. So if we want to test $m$ null hypotheses $H_0^{1}, H_0^2, \, ... \, , H_0^m$ at a familywise error rate $\alpha$, we can run $m$ independent subsampled and aggregated randomized response mechanisms at significance level $\alpha/m$. We pursue this idea in Section~\ref{subsec:gof}.

The binary decision $d$ can also be used for Bayesian hypothesis testing. If $d$ is calibrated to have type I error $\alpha$, then the posterior probability of $H_1$  given that $d = 0$ is
\begin{align*}
\P(H_1 \mid d = 0)  &= \frac{\P(H_1)  \P(d = 0 \mid H_1)}{ \P(H_0) \P(d = 0 \mid H_0)  + \P(H_1) \P( d = 0 \mid H_1)} \\ &=  \frac{\P(H_1)  \P(d = 0 \mid H_1)}{ \P(H_0) (1-\alpha)  + \P(H_1) \P( d = 0 \mid H_1)}.
\end{align*}
If $\P(d = 0 \mid H_1)$ goes to zero as the sample size increases (i.e., $d$ is consistent under $H_0$), then $\P(H_1 \mid d = 0)$ goes to zero as the sample size increases for all $\alpha$ and $\P(H_0)$. Therefore, the Bayesian test can give decisive evidence in favor of $H_0$ asymptotically.

Analogously, the posterior probability of $H_1$ given that $d = 1$ is
\begin{align*}
\P(H_1 \mid d = 1) &= \frac{\P(H_1) \P( d = 1 \mid H_1)}{ \P(H_0)  \alpha + \P(H_1) \P( d = 1 \mid H_1)}.
\end{align*}
If $d$ is consistent under $H_1$, $\P(H_1 \mid d = 1)$ converges to $\P(H_1)/[ \P(H_0) \alpha + \P(H_1)].$ If the null and alternative hypotheses are equally likely \textit{a priori}, the limit simplifies to $1/(\alpha +1)$, which is greater than $1-\alpha$. In this case, the Bayesian test cannot give decisive evidence in favor of $H_1$ asymptotically, but it can give fairly strong evidence in its favor. 

For finite sample sizes, we can evaluate $\P(H_1 \mid d)$ given $\P(H_1)$, $d$, and a prior distribution on the power $\pi(\gamma_0 \mid H_1) = \P(x_i = 1 \mid H_1)$. Once this prior is set,
$$
\P(d = 1 \mid H_1) =  \int_0^1 \P(T > k \mid \gamma_0, H_1) \, \pi(\gamma_0 \mid H_1) \,  \mathrm{d} \gamma_0,
$$
where $T \mid \gamma_0, H_1 \sim  \mathrm{Binomial}(2k+1, p \gamma_0 + (1-p)(1- \gamma_0) )$.

In the absence of strong prior information about $\gamma_0 \mid H_1$, we recommend running a sensitivity analysis with different priors. When the power can be expressed as a function of an effect size $\delta$, it can be convenient to induce a prior distribution through it. We illustrate this point with the $z$-test below.

\begin{example}[Bayesian one sample $z$-test]  \label{ex:bayes_ztest}
Let the data be independent and identically distributed as $\mathrm{Normal}(\mu, \sigma^2)$. We test $H_0: \mu = \mu_0$ against $H_1: \mu \neq \mu_0$ with a Bayesian test. We set $\P(H_0) = \P(H_1) = 1/2$ and put a unit information prior $\mathrm{Normal}(\mu_0, \sigma^2)$ on $\mu \mid H_1$, which induces $\delta \mid H_1 \sim \mathrm{Normal}(0,1)$ on the effect size $\delta = (\mu-\mu_0)/\sigma$. The prior on $\delta \mid H_1$, in turn, induces a prior $\pi(\gamma_0 \mid H_1) = \P(x_i = 1 \mid H_1).$ The unit information prior is a common default choice for this problem and contains roughly as much information as one observation in the sample \citep{kass1995reference}.  We set $\varepsilon = 1.5$ and consider $k \in \{1, 2, 10\}$ and subgroup sample sizes $b \in \{2, 3, \, ... \, , 50\}$. For any given $k$ and $b$, the total sample size is $n = (2k+1) b$. The results are shown in Figure~\ref{fig:bayes}. As $b$ increases, the posterior probability is more decisive against or in favor of $H_1$, depending on whether $d = 0$ or $d = 1$, respectively. 
\end{example}

{{A peculiarity of our Bayesian analysis is that $d$ has a fixed type I error rate $\alpha$. From a strictly Bayesian perspective, one could output a binary decision $d$ that does not have a fixed type I error rate. However, we appreciate the fact that $d$ can be interpreted by both frequentists and Bayesians, which is in line with ongoing efforts for reconciling frequentist and Bayesian answers (see e.g. \cite{bayarri2004interplay} and \cite{bayarri2016rejection}).}}

{{The Bayesian approach described here is based on conditioning on a binary outcome. There are other proposals in the differential privacy literature that would approach the problem differently.}}

{{For example, an alternative approach would be to condition on perturbed sufficient statistics instead of binary outcomes, as proposed in  \cite{amitai2018differentially} and \cite{pena2021differentially}. }}

{{Another option is drawing directly from posterior distributions in a way that ensures differential privacy (see e.g. \cite{dimitrakakis2017differential}, \cite{heikkila2019differentially}, \cite{geumlek2017renyi}, and
\cite{hu2022private}).  However, these strategies often assume an upper bound on the likelihood, which may require the user to modify their models to meet the assumption. One major drawback of these methods is that they typically require a privacy budget proportional to the number of posterior samples desired. This, in turn, may lead to unreliable Monte Carlo approximations if $\varepsilon$ is small. In spite of the potential drawbacks, applying these approaches to hypothesis testing is worthy of further exploration and development.}}

{{Another promising approach would be to use the data augmentation Markov Chain Monte Carlo scheme proposed in \cite{ju2022data}, which has been applied to estimation problems, but not to hypothesis testing.}}

\begin{figure}
\includegraphics[width=\linewidth]{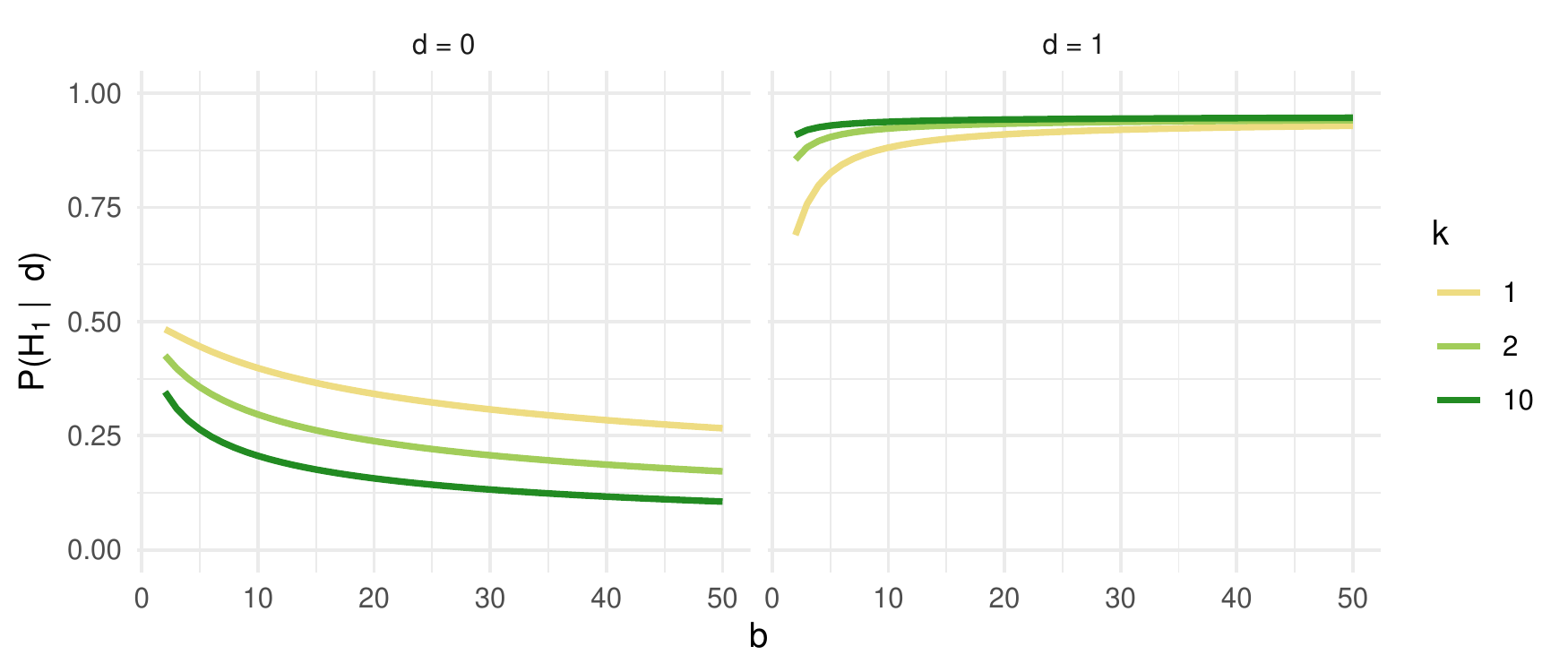}
\caption{Posterior probability $\P(H_1 \mid d)$ as a function of $k$ and the sample size within the subgroups $b$ if $\P(H_0) = 1/2$ and the effect size is $\delta \sim \mathrm{Normal}(0,1)$.}
\label{fig:bayes}
\end{figure}

\section{Illustrations}
\label{sec:illustrations}

{{In this section, we evaluate the performance of our methods in simulation studies and an application. In Section~\ref{subsec:gof}, we use the housing dataset in \cite{lei2018differentially} to implement differentially private versions of the goodness of fit tests developed in \cite{pena2006global}. In Section~\ref{subsec:bayes_wilcox}, we run a simulation study applying the Bayesian extension devised in Section~\ref{sec:extensions} to the Wilcoxon test. Finally, in Section~\ref{subsec:kw}, we compare our general-purpose method to a differentially private Kruskal-Wallis test that was proposed in \cite{couch2019differentially}. In the supplementary material, there is an additional simulation study where we compare our method to the differentially private $t$-test proposed in \cite{barrientos2019differentially}.

In our illustrations, we include the subsampled and aggregated Laplace mechanism \citep{smith2011privacy} as a competitor. For this method, we split the data into $2k+1$ subsets and run the corresponding nonprivate tests within them. The output is made differentially private after adding a Laplace perturbation term. We consider two variants of this approach.

In the first one, we find $2k+1$ $p$-values, one for each hypothesis test, and find the average $p$-value. The result is made differentially private after adding a perturbation term $\eta \sim \mathrm{Laplace}(0, 1/[\varepsilon(2k+1)])$ to the average $p$-value. The distribution of the differentially private statistic can be simulated under $H_0$, so it is straightforward to find a critical value that ensures a fixed type I error rate $\alpha$.

The second approach was suggested by an Associate Editor, and is based on the sum of binary outcomes $\sum_{i=1}^{2k+1} x_i$ instead of the average $p$-value. In that case, the output is made differentially private after adding a perturbation term $\eta \sim  \mathrm{Laplace}(0, 1/\varepsilon)$ to the sum. We can easily find a critical value that ensures a type I error rate $\alpha$ by simulating the distribution of the statistic under $H_0$. }}

\subsection{Goodness-of-fit tests for regression} \label{subsec:gof}

In this section, we study the performance of the subsampled and aggregated randomized response mechanism in a differentially private implementation of four goodness-of-fit tests for regression proposed in \cite{pena2006global}.

We perform a simulation study based on the housing dataset used in  \cite{lei2018differentially}.  The dataset contains information on houses sold in the San Francisco Bay area between 2003 and 2006. In our analysis, we only consider houses with prices within the \$105000–905000 range and sizes smaller than 3000 ft$^2$. After preprocessing, the dataset contains 235760 rows and the following variables: price (used as response $Y$), base square footage, time of transaction, lot square footage, latitude, longitude, age, number of bedrooms, and a binary variable indicating whether the house is located in a small county.

\cite{pena2006global} develop tests for checking model assumptions in the normal linear model. They provide a global test of goodness-of-fit and individual tests for detecting specific violations of assumptions. Here, we consider four tests: (1) a test whose null hypothesis is that the kurtosis of the errors is equal to 3, which is satisfied when the errors are normal, (2) a test whose null is that the errors are symmetric, (3) a test whose null is that the errors are homoscedastic, (4) and a test whose null is that the expected value of the response is linear in the predictors. For each of our simulations, we perform the four tests at significance level $\alpha/4$ and privacy level $\varepsilon/4$. This ensures that our answers have a familywise error rate $\alpha$ and a global privacy level $\varepsilon$.

To simulate data, we first fit a normal linear model using price as the response $Y$ and the remaining variables as predictors $X$, obtaining maximum likelihood estimates of the regression coefficients $\widehat{\beta}$ and the residual standard deviation $\widehat{\sigma}$. Then, we use $\widehat{\beta}$ and $\widehat{\sigma}$ along with the observed $X$ to simulate new values of the response. More precisely, we simulate data from the model $Y^\ast = X \widehat{\beta} + \widehat{\sigma} W^\ast$, where $W^\ast$ is a vector with independent and identically distributed skew-normal components with location and scale parameters equal to zero and one, respectively, and skew parameter equal to $\theta$. If $\theta = 0$, all the assumptions of the normal linear model hold, but if $\theta \neq 0$, the null hypotheses related to the kurtosis and skewness of errors are false.

We set the significance level to $\alpha \in \{0.005, 0.01, 0.05, 0.1\}$ and consider privacy parameters $\varepsilon \in \{0.5, 0.75, 1, 1.25, 1.5\}$. The skew parameter $\theta$ is ranging from 0 to 1.5. For each value of $\alpha$, $\varepsilon$ and $\theta$, we perform $10^4$ simulations. The number of subgroups $2k+1$ is determined with the automatic strategy outlined in Section~\ref{sec:tuning} with $\alpha_{0, \min} = \alpha$.

{{The results for the test of skewness can be found in Figure~\ref{fig:gof}. The results for the test of kurtosis can be found in the Supplementary Material and are similar to the ones we observe for skewness. The results for the tests of linearity and homoscedasticity are uninteresting: since the null hypothesis holds in these cases, the probability of rejecting the null is fixed to $\alpha/4$ for all $\varepsilon$ and $\theta$. In Figure~\ref{fig:gof}, we also include the ``truth'', defined as the result of running the nonprivate test without splitting the data or running any mechanisms. The subsampled-and-aggregated sum and randomized response (labeled as SARR) outperform the average $p$-value in most scenarios.  The average $p$-value is best for small $\alpha$ and $\varepsilon$, especially for low $\theta$. The performance of the sum and randomized response, which are both based on binarized outcomes, is quite similar. Randomized response performs best when $\varepsilon \in \{1, 1.25, 1.5\}$ and $\alpha = 0.005$.}}

\begin{figure}
\includegraphics[width=\linewidth]{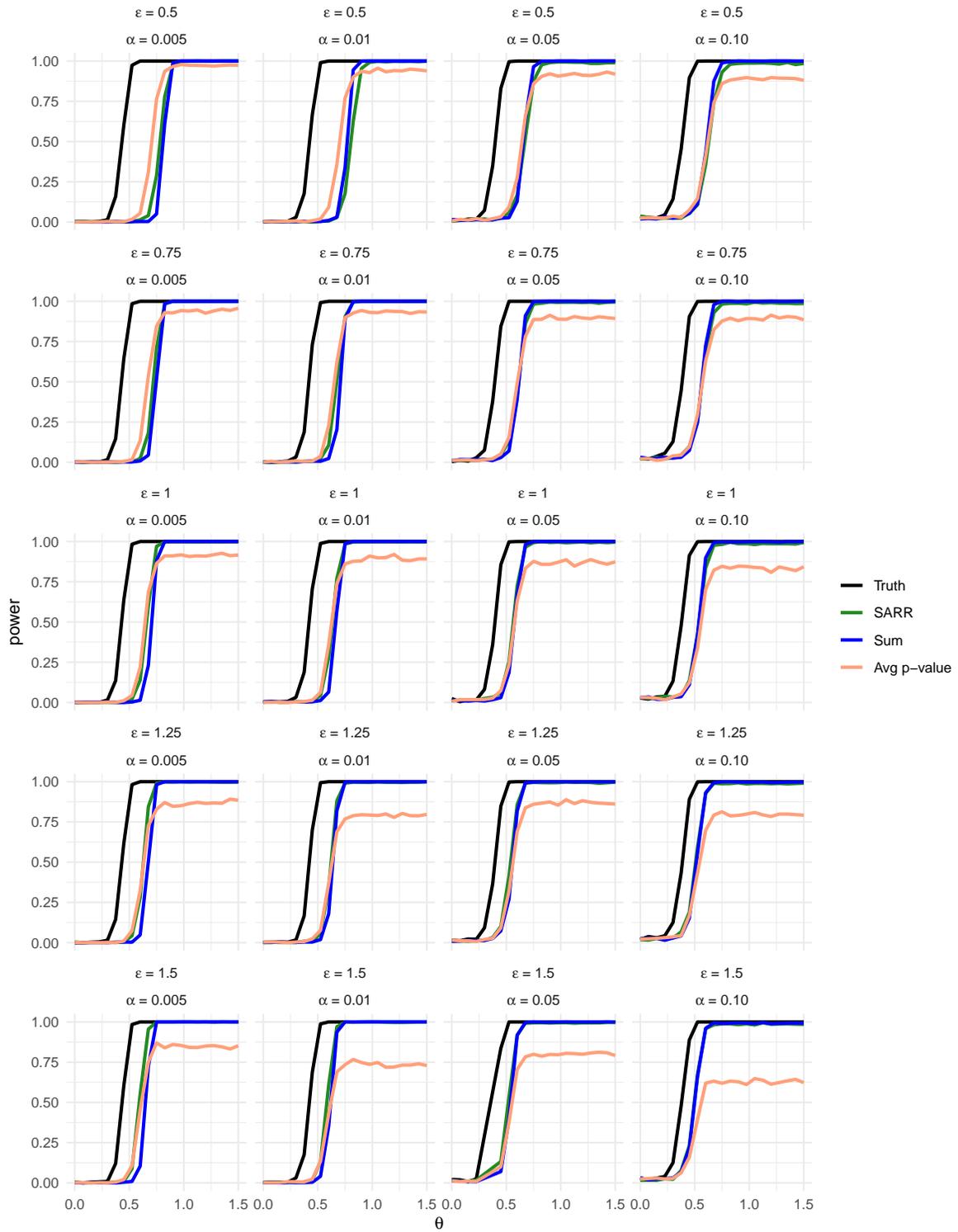}
\caption{Goodness-of-fit tests: Average power of tests for skewness for different combinations of $\alpha$ and $\varepsilon$.}
\label{fig:gof}
\end{figure}

\subsection{Bayesian answers from one-sample Wilcoxon test} \label{subsec:bayes_wilcox}

In this section, we report the results of a simulation study that compares posterior probabilities of hypotheses based on the outcomes of one-sample Wilcoxon tests, using the approach proposed in Section~\ref{sec:extensions}. 

We test $H_0: \theta =  0$ against $H_1: \theta \neq 0$ for a location parameter $\theta \in \mathbb{R}$. The tests behind the $x_i$ are one-sample Wilcoxon tests and the parameters of the mechanism are tuned so that $d$ has type I error rate $\alpha = 0.05$.

We repeatedly simulate datasets of sample size $n = 200$ comprised of independent and identically distributed observations $y_i = \theta + \tau_i$ for  $\theta \in \{0, 0.25, \, ... \, , 2\}$, where $\tau_i$ has a Student-$t$ distribution with 1.5 degrees of freedom. We consider  {{$\alpha \in \{0.005, 0.01, 0.05, 0.1\}$,  $\varepsilon \in \{0.5, 0.75, 1, 1.25, 1.5\}$}}, and run $10^4$ simulations for each combination of $\theta$, $\alpha$ and $\varepsilon$. We select $k$ with the automatic strategy we described in Section~\ref{sec:hyptest} with $\alpha_{0, \min} = \alpha$.

The prior probabilities on the hypotheses are $\P(H_0) = \P(H_1) = 1/2$. As we mentioned in Section~\ref{sec:extensions}, we need a prior on $\P(d = 1 \mid H_1)$ to perform a Bayesian test based on $d$. Table~\ref{tab:priors} lists the definitions of $\P(d = 1 \mid H_1)$ for the methods we included in the simulation study. As we did in Example~\ref{ex:bayes_ztest}, we induce a prior on $\P(d = 1 \mid H_1)$ through simpler quantities for which we can define a prior more comfortably. 

\begin{table}[t]
\centering
\caption{\label{tab:priors} Prior distributions for one-sample Wilcoxon test and their values of $\P(d = 1 \mid H_1)$. ``SA'' stands for ``subsample and aggregate.''}
\begin{tabular}{l|l}
Method       & $\P(d = 1 \mid H_1)$                          \\
\hline
Truth        & $0.8 \overline{\gamma}^z_0$  \\
SA + Randomized Response     & $\int_0^1 \P(T > k \mid \gamma_0, H_1) \, \pi(\gamma_0 \mid H_1) \,  \mathrm{d} \gamma_0$ \\
SA  + Average $p$-value & $\P( \overline{p} + \eta < c_{\alpha} \mid H_1)$    \\
SA  + Sum          &       $\P( \sum_{i=1}^{2k+1} x_i + \eta > \tilde{c}_{\alpha} \mid H_1)$  
\end{tabular}
\end{table}

For the subsampled and aggregated randomized response mechanism, we define the prior $\pi(\gamma_0 \mid H_1) = \P(x_i = 1 \mid H_1)$ as follows. Let $\overline{\gamma}^z_0$ be the average power of the $z$-test induced by the unit information prior we used in Example~\ref{ex:bayes_ztest}. Then, our prior $\pi(\gamma_0 \mid H_1)$ is a beta distribution parametrized in terms of its expected value $\mu$ and effective sample size $\kappa$ with $\mu = 0.8 \overline{\gamma}^z_0$ and $\kappa = 2k +1$ (see e.g. Chapter 6 of \cite{kruschke2014doing} for a discussion on the convenience of this parametrization in Bayesian analysis). 

{{{For the subsampled and aggregated average $p$-value}}, we put a prior on the average $p$-value $\overline{p}$ under $H_1$. First, we find the expected $p$-value for the $z$-test $\overline{p}_z$ induced by the unit information prior. Then, we define our prior on $\overline{p} \mid H_1$ as a beta distribution centered at $\mu = 0.8 \overline{p}_z$ and with an effective sample size $\kappa = 2k+1$. Given the prior  $\overline{p} \mid H_1$ and $\eta \sim \mathrm{Laplace}(0, 1/[\varepsilon(2k+1)])$, $\P(d = 1 \mid H_1)$ is the probability that $\overline{p} + \eta$ is below a critical value $c_{\alpha}$ so that  $\P(\overline{p} + \eta < c_\alpha \mid H_0) = \alpha$.

{{{For the subsampled and aggregated sum, we follow a similar strategy. We put a prior on the sum $\sum_{i =1}^{2k+1} x_i \sim \mathrm{Binomial}(2k+1, \gamma_0)$ through $\pi(\gamma_0 \mid H_1) = \P(x_i = 1 \mid H_1)$, which is a beta distribution with $\mu = 0.8 \overline{\gamma}_0^z$ and $\kappa = 2k+1$ (as it was with the subsampled and aggregated randomized response mechanism). The probability $\P(d = 1 \mid H_1)$ is the probability that $\sum_{i=1}^{2k+1} x_i + \eta$ is above a critical value $\tilde{c}_{\alpha}$ so that  $\P( \sum_{i=1}^{2k+1} x_i + \eta > \tilde{c}_\alpha \mid H_0) = \alpha$, where $\eta \sim \mathrm{Laplace}(0, 1/\varepsilon)$.}}

Lastly, we include the ``truth'', defined as the result of running the usual nonprivate Wilcoxon test, without splitting the data or running any mechanisms. For that case, we take $\P(d = 1 \mid H_1) = 0.8 \overline{\gamma}^z_0$. 

Figure~\ref{fig:wilcox} displays the results of the simulation study. {The methods that are based on binarized outcomes, namely the subsampled and aggregated randomized response mechanism and the subsampled and aggregated perturbed sum, tend to outperform the subsampled and aggregated $p$-value. The exception is the case $\varepsilon = 0.5$ and $\alpha = 0.005$. The performance of randomized response and the perturbed sum is similar. Randomized response seems to be especially helpful when $\alpha$ is low and $\varepsilon$ is larger or equal to 1.}

\begin{figure}
 \includegraphics[width=\linewidth]{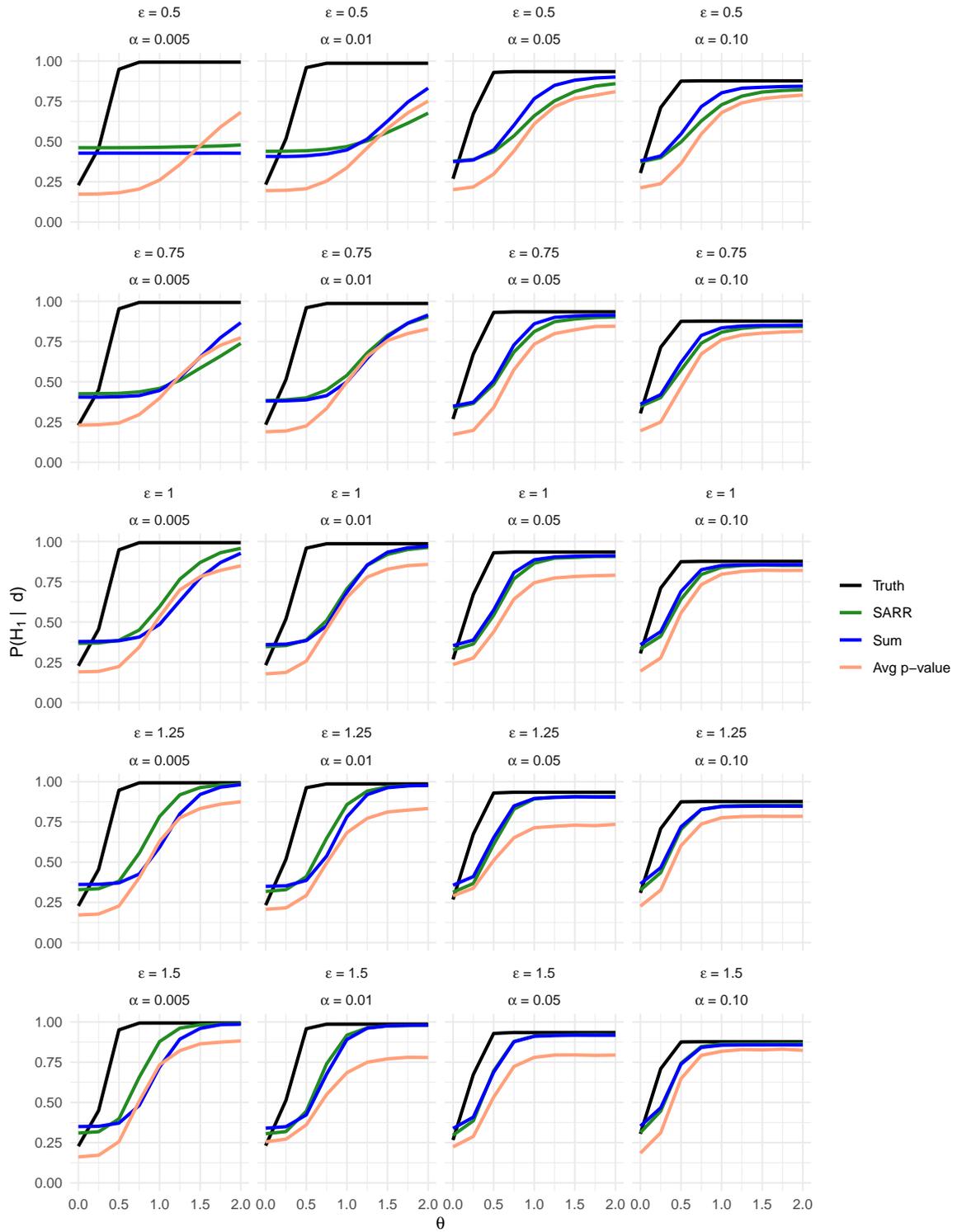}
 \caption{One-sample Wilcoxon test: Average posterior probability $\P(H_1 \mid d)$ for different values of $\alpha$, $\varepsilon$, and location parameter $\theta$.}
 \label{fig:wilcox}
 \end{figure}

{\subsection{Nonparametric ANOVA: Kruskal-Wallis test} \label{subsec:kw}}

{{In this section, we report the results of a simulation study involving the Kruskal-Wallis test. As a competitor, we include the test proposed in Section 3.3 in \cite{couch2019differentially}, which is a differentially private method that is built specifically for the Kruskal-Wallis test.}}

{{We simulate data independently from three groups: one where the data are distributed as Normal$(1,1)$, another where the data are Normal($2,1$), and another one where the data are Normal$(3,1)$. We consider sample sizes ranging from 15 to 500 in increments of 3 so that the groups are balanced. The power is approximated after performing $10^4$ simulations. This simulation study is similar to the one conducted in Section 3.4 in \cite{couch2019differentially}. We consider $\alpha \in \{0.005, 0.01, 0.05, 0.1\}$ and $\varepsilon \in \{0.5, 0.75, 1, 1.25, 1.5\}$.}}

{{As we did in the previous illustrations, we set the number of subgroups with the heuristic recommended in Section~\ref{sec:tuning}. When $n$, $\varepsilon$, and $\alpha$ are all small, the sample size is not large enough to simultaneously guarantee $\varepsilon$ differential privacy and type I error $\alpha$ with the randomized response mechanism (see Table~\ref{tab:minalpha} and the discussion in Section~\ref{sec:tuning}). For those cases, the method in \cite{couch2019differentially} can be run, but its power is low.}}

{{The results of the simulation study are displayed in Figure~\ref{fig:kruskal}. Overall, the method in \cite{couch2019differentially} (labeled as KWabs), which is tailored to this task, outperforms the general-purpose algorithms. The loss in power is most noticeable for smaller $n$, $\alpha$, and $\varepsilon$.  Randomized response is preferable over the sum when $\alpha \times \varepsilon \in \{0.005, 0.01\} \times \{1, 1.25, 1.5\}$. The sum outperforms randomized response when $\alpha \times \varepsilon \in \{0.05, 0.10\} \times \{0.5, 0.75\}$. In the remaining cases, the performance of the two approaches is relatively similar. Alternatively, the average $p$-value performs best (out of all general-purpose algorithms) for $\alpha \times \varepsilon \in \{0.005, 0.01\} \times \{0.5, 0.75\}$.}}

\begin{figure}
\includegraphics[width=\linewidth]{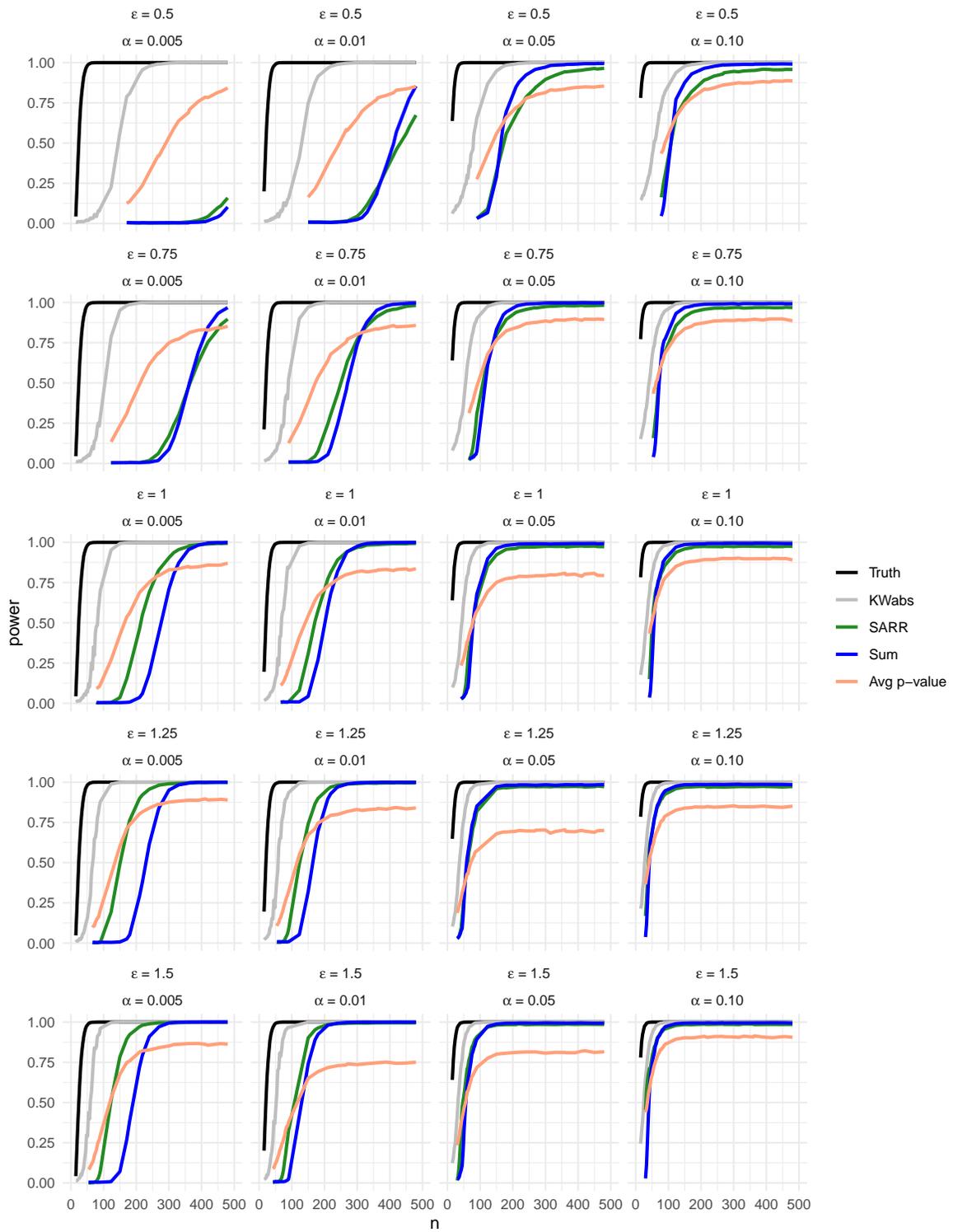}
\caption{Kruskal-Wallis test: Power for different values of $\alpha$ and $\varepsilon$ for a range of total sample sizes $n$.}
\label{fig:kruskal}
\end{figure}

\section{Discussion and future work} \label{sec:discussion}

The subsampled and aggregated randomized response mechanism is a simple and effective tool for constructing differentially private tests from nonprivate tests. In our illustrations, we have seen that the method is especially useful when the type I error rate $\alpha$ is small and $\varepsilon$ is greater or equal to 1. 

We have focused on hypothesis testing, but the subsampled and aggregated randomized response mechanism can be useful in other contexts, especially when the data are naturally split into groups. One such example is federated learning \citep{konevcny2016federated}, where the data are assumed to be stored in different clients. Another application is multiagent decision problems with confidentiality constraints, where the goal is making a collaborative decision ensuring that the individual recommendations are kept private. 

{{A drawback of our approach is that it may not be implemented if the sample size, the type I error $\alpha$ and the privacy parameter $\varepsilon$ are all small (see Table~\ref{tab:minalpha}). However, in those instances, the differentially private tests that can be implemented are low-powered.}}

{{In Section~\ref{subsec:kw}, we compared general-purpose algorithms for differentially private testing to a test proposed in \cite{couch2019differentially} that was specifically developed for the Kruskal-Wallis test. The test proposed in \cite{couch2019differentially} was considerably more powerful than the general-purpose algorithms for small $\alpha$ and $\varepsilon$, but its performance was comparable to that of the subsampled and aggregated randomized response mechanism when $\alpha \ge 0.05$ and $\varepsilon \ge 1$.}}

The subsampled and aggregated randomized response mechanism can be extended in a number of ways. For instance, the mechanism could output a categorical variable with multiple categories instead of a binary decision. Another extension could accommodate multi-step multiple hypothesis testing methods such as the Benjamini-Hochberg procedure \citep{benjamini1995controlling}. In that case, the privacy level of the algorithm should be computed with care because the outputs of the tests become dependent. 

\bibliographystyle{chicago}
\bibliography{main}

\clearpage
\appendix

\begin{center}
{\huge{Supplementary material}}
\end{center}

This document contains supplementary material to the main text of the article. Section~\ref{sec:aux} includes auxiliary results needed to prove the results in the main text, which are proved in Section~\ref{sec:proofs}. In Section~\ref{sec:ttest}, we present an additional simulation study where we compare our method to the differentially private $t$-test proposed in \cite{barrientos2019differentially}. Lastly, Section~\ref{sec:kurtosis} presents the results of the test for the kurtosis of errors proposed in \cite{pena2006global}, in the context of the simulation study reported in Section 4.1 of the main text. 

\newtheorem{prop}{Proposition}[section]

\section{Auxiliary results} \label{sec:aux}

First, we prove auxiliary propositions that are helpful for proving the results in the main text. All of them use theorems in \cite{shaked2007stochastic}. 

We use the notation $\mathrm{Binomial}(i,p) + \mathrm{Binomial}(j,p)$ for the distribution of the sum of independent $\mathrm{Binomial}(i,p)$ and $\mathrm{Binomial}(j,p)$ random variables, with the understanding that if the number of trials is zero, the random variable is zero with probability one.

\begin{definition}
Let $X$ and $Y$ be discrete random variables with common support $S$, which is a subset of the integers. We say that $X$ stochastically dominates $Y$ with respect to the likelihood ratio order if
${\mathbbm{P}(X=t)}/{\mathbbm{P}(Y = t)}$
is increasing in $t$ for $t \in S$. 
\end{definition}

\begin{prop} \label{prop:genlrdom}
Let $B_i \sim \mathrm{Binomial}(n-i, 1-p) + \mathrm{Binomial}(i, p)$ with $1/2 < p < 1$ for $i \in \{0, 1, \, ... , \, n \}$ . For any $\{i , j \} \subset \{0, 1, \, ... \, , n\}$ such that $j > i$, $B_j$ stochastically dominates $B_i$ with respect to the likelihood ratio ordering. 

\end{prop}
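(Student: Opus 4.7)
The plan is to reduce the claim to a one-step comparison and then invoke a standard preservation theorem for the likelihood ratio order under convolution.

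By transitivity of $\le_{lr}$, it suffices to prove $B_i \le_{lr} B_{i+1}$ for every $i \in \{0, 1, \ldots, n-1\}$. I would then exhibit a common summand by writing, in distribution,
$$B_i \stackrel{d}{=} Z + U, \qquad B_{i+1} \stackrel{d}{=} Z + V,$$
where $Z \sim \mathrm{Binomial}(n-i-1, 1-p) + \mathrm{Binomial}(i, p)$, $U \sim \mathrm{Bernoulli}(1-p)$, $V \sim \mathrm{Bernoulli}(p)$ are mutually independent. Intuitively, passing from $B_i$ to $B_{i+1}$ amounts to upgrading one $\mathrm{Bernoulli}(1-p)$ trial to a $\mathrm{Bernoulli}(p)$ trial.

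A direct check gives $V \ge_{lr} U$: since $1/2 < p < 1$, the ratio $\P(V = t)/\P(U = t)$ equals $(1-p)/p < 1$ at $t = 0$ and $p/(1-p) > 1$ at $t = 1$, so it is nondecreasing in $t$. I would then invoke the preservation theorem for $\le_{lr}$ under convolution with an independent log-concave (PF2) random variable (e.g., Theorem 1.C.11 in \cite{shaked2007stochastic}): if $U \le_{lr} V$ and $Z$ is independent of $U, V$ with a PF2 PMF, then $Z + U \le_{lr} Z + V$. Since $Z$ is a sum of independent Bernoullis, its PMF is PF2 (binomial PMFs are log-concave and the PF2 class is closed under convolution), so the theorem applies and yields $B_i \le_{lr} B_{i+1}$. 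Chaining these one-step comparisons proves the proposition.

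The step I expect to require the most care is verifying that the hypotheses of the preservation theorem hold for our specific $Z$, in particular the log-concavity of the PMF of a sum of Bernoullis with heterogeneous success probabilities. This is standard, but should be stated explicitly (or derived from the closure of PF2 under convolution applied to the two binomial summands) to keep the argument airtight.
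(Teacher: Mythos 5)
Your proof is correct and takes essentially the same route as the paper's: both write the two variables as a common log-concave summand plus likelihood-ratio-ordered increments and invoke the convolution-preservation theorem for $\le_{lr}$ in \cite{shaked2007stochastic} (the paper does it in one shot with increments $\mathrm{Binomial}(j-i,1-p) \le_{lr} \mathrm{Binomial}(j-i,p)$ via Theorem 1.C.9, while you chain single Bernoulli upgrades and appeal to transitivity). The log-concavity of the common summand, which you rightly flag as the point needing care, is handled the same way in the paper, via closure of log-concavity under convolution of the binomial parts.
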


\begin{proof}
The result follows by an application of Theorem 1.C.9. in \cite{shaked2007stochastic}. To see this, let $\{i , j \} \subset \{0, 1, \, ... \, , n\}$ such that $j > i$. Let $C_i, C_j \sim \mathrm{Binomial}(i,p)+\mathrm{Binomial}(n-j,1-p)$, $N_i \sim \mathrm{Binomial}(j-i, 1-p)$, and $N_j \sim \mathrm{Binomial}(j-i, p)$. We can write $B_i = C_i + N_i$ and $B_j = C_j + N_j$. Binomial random variables have log-concave probability mass functions, $C_j$ stochastically dominates $C_i$ with respect to the likelihood ratio ordering (they are equal in distribution), and $N_j$ dominates $N_i$ with respect to the likelihood ratio ordering because $p > 1/2$ by assumption. We can apply Theorem 1.C.9. in \cite{shaked2007stochastic} and the result follows.
\end{proof}

\begin{prop} \label{prop:genhrdom}
Let $1/2 < p < 1$ and $B_i \sim \mathrm{Binomial}(i, p) + \mathrm{Binomial}(n-i, 1-p)$ for $i \in \{0, 1, \, ... , \, n \}$. For any $\{i , j \} \subset \{0, 1, \, ... \, , n\}$ such that $j > i$ and $x \in \{0,1, \, .... \, , n\}$,
$$
\frac{\mathbbm{P}(B_i = x) }{\mathbbm{P}(B_i > x)} \ge \frac{\mathbbm{P}(B_j = x)}{\mathbbm{P}(B_j  > x)}.
$$
\end{prop}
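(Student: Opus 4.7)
The plan is to recognize this inequality as the well-known implication that likelihood ratio dominance implies hazard rate dominance, and to obtain the former directly from Proposition~\ref{prop:genlrdom}. To avoid any division-by-zero issue at the boundary, I would first rewrite the desired inequality in the equivalent cross-multiplied form
\[
\P(B_i = x)\,\P(B_j > x) \;\ge\; \P(B_j = x)\,\P(B_i > x),
\]
which holds automatically when $x = n$ because both sides vanish.

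Next, I would set $r(t) = \P(B_j = t)/\P(B_i = t)$ on the common support $\{0, 1, \ldots, n\}$; since $0 < p < 1$, every $\P(B_i = t)$ is strictly positive and $r$ is well-defined. By Proposition~\ref{prop:genlrdom}, $r$ is nondecreasing in $t$, so $r(t) \ge r(x)$ for every $t > x$. Multiplying by $\P(B_i = t) \ge 0$ and summing over $t > x$ yields
\[
\P(B_j > x) \;=\; \sum_{t > x} r(t)\,\P(B_i = t) \;\ge\; r(x)\sum_{t > x} \P(B_i = t) \;=\; r(x)\,\P(B_i > x).
\]
Finally, multiplying both sides by $\P(B_i = x)$ and using the identity $r(x)\,\P(B_i = x) = \P(B_j = x)$ delivers the cross-multiplied inequality.

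The main obstacle here is more conceptual than technical: once one recognizes the claim as the hazard-rate consequence of likelihood ratio dominance, the computation reduces to a one-line monotonicity argument. The only care required is the boundary case $x = n$, handled by passing to the cross-multiplied form, and confirming that $r(t)$ has full support, handled by $0 < p < 1$. Invoking the stochastic ordering machinery of \cite{shaked2007stochastic} directly (as in the proof of Proposition~\ref{prop:genlrdom}) would be an alternative, but seems like overkill given that the likelihood ratio conclusion is already available.
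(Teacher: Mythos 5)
Your proof is correct and takes essentially the same route as the paper: both reduce the claim to Proposition~\ref{prop:genlrdom} together with the standard implication that likelihood ratio dominance implies hazard rate dominance. The only difference is that the paper simply cites Theorem 1.C.1 of \cite{shaked2007stochastic} for that implication, whereas you prove it from scratch via the monotone-ratio summation argument (and correctly handle the $x=n$ boundary by cross-multiplying), which makes your version self-contained but not substantively different.
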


\begin{proof}
This is a direct consequence of Proposition~\ref{prop:genlrdom} and the fact that stochastic domination according to the likelihood ratio order implies stochastic domination according to the hazard ratio order (see e.g. Theorem 1.C.1. \cite{shaked2007stochastic}). 
\end{proof}

\begin{prop}  \label{prop:gendecrBi}
Let $1/2 < p < 1$ and  $B_i \sim \mathrm{Binomial}(n-i,1-p) + \mathrm{Binomial}(i, p)$ and $x \in \{0, 1, \, ... \, , n \}$. Then, the ratio
$$
r_j = \frac{\mathbbm{P}(B_i > x)}{\mathbbm{P}(B_{i-1} > x)}
$$
is decreasing in $i$.
\end{prop}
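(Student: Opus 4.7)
The plan is to decompose $B_i$ and $B_{i-1}$ in terms of a common auxiliary variable $S_i$ plus an independent Bernoulli, reducing the claim to the monotonicity of $\P(S_i = x)/\P(S_i > x)$ in $i$, which is itself an instance of Proposition~\ref{prop:genhrdom} applied to the $(n-1)$-trial version of the family.

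First I would set $S_i \sim \mathrm{Binomial}(n-i,1-p) + \mathrm{Binomial}(i-1,p)$ and, independently of $S_i$, $W \sim \mathrm{Bernoulli}(p)$ and $Y \sim \mathrm{Bernoulli}(1-p)$. Peeling one Bernoulli summand off the $\mathrm{Binomial}(i,p)$ component of $B_i$ and off the $\mathrm{Binomial}(n-i+1,1-p)$ component of $B_{i-1}$ yields the distributional identities $B_i \stackrel{d}{=} S_i + W$ and $B_{i-1} \stackrel{d}{=} S_i + Y$. Conditioning on the Bernoulli summand gives
$$
\P(B_i > x) = p\,\P(S_i \ge x) + (1-p)\,\P(S_i > x), \qquad \P(B_{i-1} > x) = (1-p)\,\P(S_i \ge x) + p\,\P(S_i > x).
$$

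Whenever $\P(S_i > x) > 0$, dividing both sides by it shows that $r_i$ depends on $i$ only through $\rho_i := \P(S_i \ge x)/\P(S_i > x) = 1 + \P(S_i = x)/\P(S_i > x)$; explicitly, $r_i = g(\rho_i)$ with $g(\rho) = [p\rho + (1-p)]/[(1-p)\rho + p]$. A one-line calculation gives $g'(\rho) = (2p-1)/[(1-p)\rho + p]^2$, positive since $p > 1/2$, so it suffices to prove that $\rho_i$, equivalently $\P(S_i = x)/\P(S_i > x)$, is decreasing in $i$.

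For this last step, observe that $\{S_i : 1 \le i \le n\}$ is precisely the binomial-sum family of Proposition~\ref{prop:genhrdom} with $n-1$ trials, indexed by $j = i-1$; that proposition yields exactly the required non-increasing behaviour of $\P(S_i = x)/\P(S_i > x)$ in $i$, closing the argument. The main obstacle I anticipate is finding the right decomposition in the first place: one has to peel off a \emph{single} Bernoulli from each family so that the residual $S_i$ matches in distribution across the two consecutive indices $i$ and $i-1$. The boundary cases where $\P(S_i > x) = 0$ (which happens only when $x \ge n-1$) are handled by direct computation, yielding $r_i = p/(1-p)$ independently of $i$, consistent with the claimed monotonicity.
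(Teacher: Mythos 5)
Your proof is correct and follows essentially the same route as the paper's: the common variable $S_i$ you peel a single Bernoulli off to obtain is exactly the paper's $C_i$, and the final monotonicity step is again delegated to Proposition~\ref{prop:genhrdom}. The only difference is presentational---you package $r_i$ as an increasing fractional-linear function $g(\rho_i)$ of the single quantity $\P(S_i \ge x)/\P(S_i > x)$ and invoke the lemma once, whereas the paper cross-multiplies consecutive ratios and invokes it twice---which is a slightly cleaner write-up of the same argument.
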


\begin{proof} It suffices to show that for $i \in \{1,2, \, ... \, , n-1 \}$ and $x \in \{0,1, \, ... \, , n\}$, 
$$
\frac{\mathbbm{P}(B_i > x)}{\mathbbm{P}(B_{i-1} >x)}  \ge \frac{\mathbbm{P}(B_{i+1} > x)}{\mathbbm{P}(B_i > x)}.
$$
Let $C_{i+1} \sim \mathrm{Binomial}(i,p)+\mathrm{Binomial}(n-i-1,1-p)$.  Then,
\begin{align*}
\mathbbm{P}(B_{i+1} > x)-\mathbbm{P}(B_i > x) &= (2p-1) \mathbbm{P}(C_{i+1} = x) \\
\mathbbm{P}(B_i > x) &= (1-p) \mathbbm{P}(C_{i+1} > x-1) + p \mathbbm{P}(C_{i+1} > x).
\end{align*}
Similarly, letting $C_i \sim \mathrm{Binomial}(i-1,p) + \mathrm{Binomial}(n-i,1-p)$,
\begin{align*}
\mathbbm{P}(B_i > x) - \mathbbm{P}(B_{i-1} > x) &= (2p-1) \mathbbm{P}(C_i = x) \\ 
\mathbbm{P}(B_{i-1} > x) &= (1-p) \mathbbm{P}(C_i > x-1) + p \mathbbm{P}(C_i > x).
\end{align*}
Now,
\begin{align*}
\frac{\mathbbm{P}(B_i > x)}{\mathbbm{P}(B_{i-1} > x)} \ge \frac{\mathbbm{P}(B_{i+1} > x)}{\mathbbm{P}(B_i > x)} &\Leftrightarrow 
\frac{\mathbbm{P}(B_{i-1} > x)}{\mathbbm{P}(B_i > x) - \mathbbm{P}(B_{i-1} > x)} \le \frac{\mathbbm{P}(B_i > x)}{\mathbbm{P}(B_{i+1} > x)-\mathbbm{P}(B_i > x)}.
\end{align*}
The inequality on the right-hand side is equivalent to
\begin{align*}
(1-p) \frac{\mathbbm{P}(C_i > x-1)}{\mathbbm{P}(C_i = x)} + p \frac{\mathbbm{P}(C_i > x)}{\mathbbm{P}(C_i = x)} \le (1-p) \frac{\mathbbm{P}(C_{i+1} > x-1)}{\mathbbm{P}(C_{i+1}= x)} + p \frac{\mathbbm{P}(C_{i+1} > x)}{\mathbbm{P}(C_{i+1} = x)}
\end{align*}
Since $C_{i+1}$ stochastically dominates $C_{i}$ according to the likelihood ratio ordering, Proposition~\ref{prop:genhrdom} implies that $\mathbbm{P}(C_i > x)/\mathbbm{P}(C_i = x) \le \mathbbm{P}(C_{i+1} > x)/\mathbbm{P}(C_{i+1} = x)$. It remains to show that 
 $$
\frac{ \mathbbm{P}(C_i > x-1)}{\mathbbm{P}(C_{i} = x)} \le \frac{ \mathbbm{P}(C_{i+1} > x-1)}{\mathbbm{P}(C_{i+1} = x)}.
 $$
 The inequality is true after substituting  $\mathbbm{P}(C_j > x -1) = \mathbbm{P}(C_j > x) + \mathbbm{P}(C_j = x)$ in both numerators and then applying Proposition~\ref{prop:genhrdom}.
\end{proof}

\begin{prop}
\label{prop:decrRx}
Let $1/2 < p < 1$ and  $B_i \sim \mathrm{Binomial}(n-i, 1-p) + \mathrm{Binomial}(i, p)$. Then, the ratio
$$
r_x = \frac{\mathbbm{P}(B_i > x)}{\mathbbm{P}(B_{i-1} > x)}
$$
is increasing in $x$ for $x \in \{0,1, \, ... \, , n\}$.
\end{prop}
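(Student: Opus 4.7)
The plan is to deduce this monotonicity directly from Proposition~\ref{prop:genhrdom}, applied to the index pair $(i-1, i)$ (which satisfies $i > i-1$). That proposition yields
\[
\frac{\mathbbm{P}(B_{i-1} = x)}{\mathbbm{P}(B_{i-1} > x)} \ge \frac{\mathbbm{P}(B_{i} = x)}{\mathbbm{P}(B_{i} > x)}
\]
for each $x$ in the relevant range.

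First I would add $1$ to both sides. Using the elementary identity $\mathbbm{P}(B = x) + \mathbbm{P}(B > x) = \mathbbm{P}(B > x-1)$ on each ratio, this rewrites as
\[
\frac{\mathbbm{P}(B_{i-1} > x-1)}{\mathbbm{P}(B_{i-1} > x)} \ge \frac{\mathbbm{P}(B_{i} > x-1)}{\mathbbm{P}(B_{i} > x)}.
\]
Since $1/2 < p < 1$ guarantees that every $B_k$ has full support on $\{0, 1, \dots, n\}$, all four probabilities above are strictly positive for $x \in \{0, 1, \dots, n-1\}$, so I can cross-multiply to obtain
\[
\frac{\mathbbm{P}(B_i > x)}{\mathbbm{P}(B_{i-1} > x)} \ge \frac{\mathbbm{P}(B_i > x-1)}{\mathbbm{P}(B_{i-1} > x-1)},
\]
which is exactly $r_x \ge r_{x-1}$. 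This gives the desired monotonicity.

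No substantial obstacle arises: Proposition~\ref{prop:decrRx} is essentially a rephrasing of Proposition~\ref{prop:genhrdom}, which in turn is the hazard-rate manifestation of the likelihood-ratio domination established in Proposition~\ref{prop:genlrdom}. From the viewpoint of the stochastic-order framework of \cite{shaked2007stochastic}, the claim is just the standard equivalent characterization of the hazard-rate order $B_i \ge_{\mathrm{hr}} B_{i-1}$ in terms of monotonicity of the ratio of survival functions, so the entire proof reduces to the one-line algebraic manipulation above.
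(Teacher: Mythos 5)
Your proof is correct and takes essentially the same route as the paper's: both reduce the claim to Proposition~\ref{prop:genhrdom} via the identity $\mathbbm{P}(B = x) + \mathbbm{P}(B > x) = \mathbbm{P}(B > x-1)$ followed by a cross-multiplication. The only difference is cosmetic --- you run the argument forward from the hazard-rate inequality, while the paper unwinds the same steps as a chain of equivalences.
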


\begin{proof}
Let $i \in \{0, 1, \, ... , \, n \}$ and $x \in \{0,1, \, ... \, n-1\}$. It is enough to show that $r_x \le r_{x+1}$. Now,
\begin{align*}
r_x \le r_{x+1} &\Leftrightarrow \frac{\mathbbm{P}(B_{i} > x)}{\mathbbm{P}(B_{i} > x+1)} \le   \frac{\mathbbm{P}(B_{i-1} > x)}{\mathbbm{P}(B_{i-1} > x+1)} \\
&\Leftrightarrow \frac{\mathbbm{P}(B_{i} = x+1) + \mathbbm{P}(B_{i} > x + 1)}{\mathbbm{P}(B_{i} > x+1)} \le   \frac{\mathbbm{P}(B_{i-1} = x +1) + \mathbbm{P}(B_{i-1} > x + 1)}{\mathbbm{P}(B_{i-1} > x+1)},
\end{align*}
which is equivalent to 
$$
\frac{\mathbbm{P}(B_i = x+1)}{\mathbbm{P}(B_i > x+1)} \le \frac{\mathbbm{P}(B_{i-1} =x+1)}{\mathbbm{P}(B_{i-1} > x+1)}.
$$
The inequality above is shown to be true in Proposition~\ref{prop:genhrdom}. 
\end{proof}

\section{Proofs of Propositions in main text} \label{sec:proofs}

\newtheorem{prop2}{Proposition}

\begin{prop2}
Let $\varepsilon > 0$ and $p = \exp(\varepsilon)/[1+\exp(\varepsilon)]$. Then, $r(x)$ is exactly $\varepsilon$-differentially private. 

\end{prop2}
\begin{proof}

This is a standard result. It follows directly from the definition of differential privacy.
\end{proof}


\begin{prop2} 
The statistic $d_c = \mathbbm{1}( T > c)$ is exactly $\varepsilon$-differentially private with 
$$
\varepsilon =   \log \left( \frac{\P(B_1 > c_\ast) }{\P(B_0 > c_\ast)} \right),
$$
where $c_\ast = \max(c, 2k-c)$ and  $B_i \sim \mathrm{Binomial}(i,p)+ \mathrm{Binomial}(2k+1-i, 1-p)$ for $i \in \{0,1\}$.
\end{prop2}
\begin{proof}
Let $1/2 < p < 1$ and $B_i \sim \mathrm{Binomial}(i, p) + \mathrm{Binomial}(2k+1-i, 1-p)$ for $i \in \{0, 1, \, ... , \, 2k+1 \}$. Let $r_i = \P(B_i > c)$ be the probability of rejecting the null hypothesis given that $\sum_{j=1}^{2k+1} x_j = i$. By the definition of differential privacy,  
$$
\exp(\varepsilon) = \max_{i \in \{1, \, ... \, , 2k+1 \}} \max \left\{ \frac{r_i}{r_{i-1}}, \frac{r_{i-1}}{r_i}, \frac{1-r_i}{1-r_{i-1}}, \frac{1-r_{i-1}}{1-r_i} \right \}. 
$$
From Proposition~\ref{prop:genlrdom}, we know that $r_i > r_{i-1}$ (first order or ``usual'' stochastic domination is implied by likelihood ratio domination, see e.g. Theorem 1.C.1 in \cite{shaked2007stochastic}). Therefore, 
$$
 \max \left\{ \frac{r_i}{r_{i-1}}, \frac{r_{i-1}}{r_i}, \frac{1-r_i}{1-r_{i-1}}, \frac{1-r_{i-1}}{1-r_i} \right \} = \max \left\{ \frac{r_i}{r_{i-1}}, \frac{1-r_{i-1}}{1-r_i} \right\}.
$$
From Proposition~\ref{prop:gendecrBi}, we know that $r_i/r_{i-1}$ is decreasing in $i$. This narrows down our candidates for the maximum to
$$
\exp(\varepsilon) = \max \{r_1/r_0, (1-r_{2k})/(1- r_{2k+1})\}.
$$
Note that
\begin{align*}
    1-r_{2k+1} &= 1-\P(B_{2k+1} > t) = 1-\P(B_0 < 2k+1 - t) = \P(B_0 > 2k-t)\\
    1-r_{2k} &= 1-\P(B_{2k} > t) = 1-\P(B_1 < 2k+1 - t) = \P(B_1 > 2k-t).
\end{align*}
We can rewrite
$$
\exp(\varepsilon) = \max \left\{ \frac{\P(B_1 > c)}{\P(B_0 > c)}, \frac{\P(B_1 > 2k-c)}{\P(B_0 > 2k-c)} \right\}.
$$
The result follows because Proposition~\ref{prop:decrRx} shows that the ratio $\P(B_1 > x)/\P(B_0 > x)$ is increasing in $x$. 

\end{proof}

\begin{prop2} 
The statistic $d_c = \1(T > c)$ has the following properties:
\begin{enumerate}
    \item For any fixed $k$ and $c$, $\varepsilon$ is increasing in $p$.
    \item For any fixed $p$ and $c \ge k$, $\varepsilon$ is decreasing in $k$. 
    \item For any fixed $k$ and $p$, $\varepsilon$ is minimized at $c = k$. 
\end{enumerate}
\end{prop2}
\begin{proof} 

First, we show that for any fixed $k$ and $c$, $\varepsilon$ is increasing in $p$. We do so by showing that $1/(\exp(\varepsilon) -1)$ is decreasing in $p$. We can write
$$
\frac{1}{\exp(\varepsilon)-1} = \frac{\P(B_0 > c_\ast)}{\P(B_1 > c_\ast) - \P(B_0 > c_\ast)}.
$$
Let $B \sim \mathrm{Binomial}(2k, 1-p)$. Then, 
$$
\P(B_1 > c_\ast) - \P(B_0 > c_\ast) = (2p-1)\P(B= c_\ast).
$$
Plugging in our new expression for $\P(B_1 > c_\ast) - \P(B_0 > c_\ast)$, we obtain
$$
\frac{1}{\exp(\varepsilon)-1} = \frac{\P(B_0 >c_\ast)}{(2p-1) \P(B=c_\ast)} = \frac{1-p}{2p-1} \frac{P(B>c_\ast-1)}{P(B= c_\ast)} + \frac{p}{2p-1} \frac{P(B > c_\ast)}{P(B = c_\ast)}.
$$
Rearranging terms,
$$
\frac{1}{\exp(\varepsilon)-1} = \frac{1-p}{2p-1} + \frac{1}{2p-1} \frac{P(B>c_\ast)}{P(B=c_\ast)}.
$$
The result follows because $(1-p)/(2p-1)$ and $1/(2p-1)$ are decreasing in $p$ for $1/2 < p < 1$, and $P(B > c_\ast)/P(B = c_\ast)$ is also decreasing in $p$. The  latter is true because if $p$ increases, $B$ is stochastically decreasing according to the likelihood ratio and hazard ratio order (see e.g. Example 1.C.51. in \cite{shaked2007stochastic}), which in turn implies that $P(B > c_\ast)/P(B = c_\ast)$ is decreasing, as required.

Now, we show that for any fixed $p$ and $c \ge k$, $\varepsilon$ is decreasing in $k$. Since $c \ge k$, we know by Proposition~\ref{prop:epsilon} that $c_\ast = c$ and $
\exp(\varepsilon) = {\P(B_1 > c)}/{\P(B_0 > c)}.$ Let $B \sim \mathrm{Binomial}(2k, 1-p)$. Then, the ratio can be rewritten as
$$
 \frac{\mathbbm{P}(B_1 > c)}{\mathbbm{P}(B_0 > c)}  = 1 +  \frac{2p-1}{1-p + \frac{\mathbbm{P}(B > c)}{\mathbbm{P}(B = c)}}
$$
Therefore, $\varepsilon$ is decreasing in $k$ if and only if $\P(B=c)/\P(B>c)$ is decreasing in $k$. The result follows because $B$ is stochastically increasing in $k$ with respect to the likelihood ratio order, so the ratio $\P(B=c)/\P(B>c)$ is decreasing in $k$ (again, this is implied by the fact that stochastic domination with respect to the likelihood ratio order implies domination with respect to the hazard ratio order).

Finally, we show that for any fixed $k$ and $p$, $\varepsilon$ is minimized at $c = k$. From Proposition~\ref{prop:decrRx}, we know that $\varepsilon$ is increasing in $c_\ast$. Since $c_\ast = \max(c, 2k-c)$, it follows that for fixed $k$, and $p$, $\varepsilon$ attains its minimum at $c = k$, as required. 

\end{proof}

\begin{prop2} 
The statistic $d = \1(T > k)$ has the following properties:
\begin{enumerate}
    \item For any fixed $p$, 
$$
\lim_{k \rightarrow \infty} \varepsilon  = \log\left(1+\frac{(2p-1)^2}{2p(1-p)}\right) > 0.
$$
\item A necessary condition on $p$ for achieving $\varepsilon$ differential privacy is
$$
p \le \frac{1}{2} \left( 1 + \frac{\sqrt{\exp(2 \varepsilon)-1}}{1+\exp(\varepsilon)} \right).
$$
A sufficient condition on $p$ for achieving $\varepsilon$ differential privacy is 
$$
p \le \frac{\exp(\varepsilon)}{1+\exp(\varepsilon)}.
$$
\end{enumerate}

\end{prop2}
\begin{proof}

For proving the results in this proposition, it will be useful to rewrite $\exp(\varepsilon)$ in terms of $B \sim \mathrm{Binomial}(2k, 1-p) $. 

Let $B_0 \sim \mathrm{Binomial}(2k+1, 1-p)$, and $B_1 \sim\mathrm{Binomial}(1, p) + \mathrm{Binomial}(2k, 1-p)  $. By Proposition~\ref{prop:epsilon}, we know that 
$$
\exp (\varepsilon) = \frac{\P(B_1 > k)}{\P(B_0 > k)}.
$$
Let $B \sim \mathrm{Binomial}(2k, 1-p) $. The ratio can be rewritten as
\begin{equation} \label{eq:ratio_b}
 \frac{\mathbbm{P}(B_1 > k)}{\mathbbm{P}(B_0 > k)}  = 1 +  \frac{2p-1}{1-p + \frac{\mathbbm{P}(B > k)}{\mathbbm{P}(B = k)}}.
\end{equation}
This shows that $\exp(\varepsilon)$ depends on $k$ only through 
${\P(B > k)}/{\P(B = k)},$ which can be expressed as
\begin{equation} \label{eq:ratio}
\frac{\mathbbm{P}(B>k)}{\mathbbm{P}(B = k)} = \sum_{i =1}^k \frac{P(B = k + i)}{P(B = k)} = \sum_{i=1}^k \left\{ \prod_{j=1}^i \frac{k-j+1}{k+j} \right\} \left( \frac{1-p}{p} \right)^{i}.
\end{equation}
First, we find the the limit of $\varepsilon$ as $k$ grows for fixed $1/2 < p < 1$.

The product term in Equation~\eqref{eq:ratio} can be bounded as follows: 
$$
\left(\frac{k-i+1}{k+i} \right)^i \le \prod_{j=1}^i \frac{k-j+1}{k+j}  \le 1.
$$
Then,
$$
\lim_{k \rightarrow \infty} \frac{\mathbbm{P}(B > k)}{ \mathbbm{P}(B = k) } \le  \sum_{i=1}^\infty \left( \frac{p}{1-p} \right)^i =  \frac{1-p}{2p-1}.
$$
Consider the series
$$
\lim_{k \rightarrow \infty} \sum_{i=1}^k \left(\frac{k-i+1}{k+i} \right)^i \left( \frac{1-p}{p} \right)^i.
$$
All the terms are positive and the summand is increasing in $k$, so we can apply the monotone convergence theorem for series:
\begin{align*}
\lim_{k \rightarrow \infty} \sum_{i=1}^k \left(\frac{k-i+1}{k+i} \right)^i \left( \frac{1-p}{p} \right)^i = \lim_{k \rightarrow \infty}  \sum_{i=1}^k \left( \frac{1-p}{p} \right)^i = \frac{1-p}{2p-1}.
\end{align*}
Therefore, we conclude that 
$$
\lim_{k \rightarrow \infty} \frac{\mathbbm{P}(B>k)}{\mathbbm{P}(B=k)} = \frac{1-p}{2p-1}.
$$
Plugging the limit into Equation~\eqref{eq:ratio_b}, we obtain:
$$
\frac{\mathbbm{P}(B_1 > k)}{\mathbbm{P}(B_0 > k)} = 1+\frac{(2p-1)^2}{2p(1-p)},
$$
as required.

The second part of the proposition is a direct consequence of previous results. The sufficient condition corresponds to the case $k = 0$, and it is sufficient because $\varepsilon$ is decreasing in $k$, all else being equal.  The necessary condition can be found by solving for $p$ in the limiting expression of $\varepsilon$ when $k$ grows to infinity.

\end{proof}

\begin{prop2} 
For any given $s \in \{0, 1, \, ... \, , k\}$, 
$$\P(d \neq \tilde{d} \mid \textstyle \sum_{i=1}^{2k+1} x_i =  s) = \P(d \neq \tilde{d} \mid \sum_{i=1}^{2k+1} x_i = 2k+1-s).$$ 
\end{prop2}
\begin{proof}
The result can be proved by letting $s \in \{0, 1, \, ... \, , k\}$ and noting that $\P(B_s > k) = P(B_{2k+1-s} \le k)$ for $B_s \sim \mathrm{Binomial}(s,p) + \mathrm{Binomial}(2k+1-s, 1-p)$ and $B_{2k+1-s} \sim \mathrm{Binomial}(2k+1-s, p) + \mathrm{Binomial}(s, 1-p)$.

\end{proof}

\begin{prop2} 
The probability $\mathbbm{P}( d \neq \tilde{d} \mid \textstyle \sum_{i=1}^{2k+1} x_i = s )$ has the following properties:  
\begin{enumerate}
    \item For any fixed $k$ and $p$, $ \mathbbm{P}( d \neq \tilde{d} \mid \textstyle \sum_{i=1}^{2k+1} x_i = s )  $ is  decreasing in $s$ if $s > k$ and increasing in $s$ if $s \le k$.
    \item For any fixed $p$ and $s$, $\mathbbm{P}( d \neq \tilde{d} \mid \textstyle \sum_{i=1}^{2k+1} x_i = s ) $ is {decreasing} in $k$ if $s \le k$ and increasing in $k$ if $s > k$.
    \item For any fixed $k$ and $s$,  $\mathbbm{P}( d \neq \tilde{d} \mid \textstyle \sum_{i=1}^{2k+1} x_i = s )  = 1/2$ if $p = 1/2$ and $\mathbbm{P}( d \neq \tilde{d} \mid \textstyle \sum_{i=1}^{2k+1} x_i = s )  = 0$ if $p = 1$.
\end{enumerate}

\end{prop2}
\begin{proof} 
We prove the three statements separately.

1. If $s > k$, the probability of disagreement is $\P(B_s \le k)$, whereas if $s \le k$, it is $\P(B_s > k)$.  The result follows because, by Proposition~\ref{prop:genlrdom}, $B_s$ is stochastically increasing in $s$.

2. If $s \le k$, then $\mathbbm{P}(d \neq \tilde{d}) = \P(B_s > k)$, which is decreasing in $k$. To see this, let $k$ be fixed and increase it by one, defining $k_\ast = k+1$. Then, we can define $B^\ast_s \sim B_s + \mathrm{Bernoulli}(1-p)$. 
$$
\P(B^\ast_s > k_\ast) = (1-p) \P(B_s > k) + p \P(B_s > k+1),
$$
which is smaller than $\P(B_s > k)$ because $p \ge 1/2$ and $\P(B_s > k +1) < \P(B_s > k)$. If $s > k$, then $\P(d \neq d) = \P(B_{s} \le k)$, and a similar argument to the one we just used shows that it is increasing in $k$.

3. This proof of this part is direct given the expression of $\mathbbm{P}( d \neq \tilde{d} \mid \textstyle \sum_{i=1}^{2k+1} x_i = s )$.

\end{proof}

\begin{prop2} 
The probability that $d$ rejects $H_0$ has the following properties:  
\begin{enumerate}
    \item For any fixed $k$ and $p$,  the probability that $d$ rejects $H_0$ is decreasing in $\gamma_0$.
    \item For any fixed $\gamma_0$ and $k$, the probability that $d$ rejects $H_0$ is decreasing in $p$ if $\gamma_0 < 1/2$ and increasing in $p$ if $\gamma_0 \ge 1/2$.
    \item Let $p > 1/2$ be fixed. If $\gamma_0 > 1/2$, then the probability that $d$ rejects $H_0$ goes to 1 as $k \rightarrow \infty$. Alternatively, if $\gamma_0 < 1/2$, then the probability that $d$ rejects $H_0$ goes to 0 as $k \rightarrow \infty$.
\end{enumerate}

\end{prop2}
\begin{proof} 
The probability that $d$ rejects $H_0$ is $\P(T > k)$ for $T \sim \mathrm{Binomial}(2k+1, p \gamma_0 + (1-p) (1-\gamma_0))$. We prove the three statements separately.

1. Since $p \ge 1/2$,  $\P(T > k)$ is increasing in $\gamma_0$.

2. This is similar to part 1. If $\gamma_0 < 1/2$, then $p \gamma_0 + (1-p)(1-\gamma_0)$ is decreasing in $p$. If $\gamma_0 \ge 1/2$, then it is increasing in $p$. 

3. If $p > 1/2$ and $\gamma_0 > 1/2$, then $p \gamma_0 + (1-p)(1-\gamma_0) > 1/2$ and $\P(T > k)$ goes to one as $k$ goes to infinity. Similarly, if $p > 1/2$ and $\gamma_0 < 1/2$, then  $p \gamma_0 + (1-p)(1-\gamma_0) < 1/2$ and $\P(T > k)$ goes to zero as $k$ goes to infinity. One can find these limits using standard tail bounds for the binomial distribution.

\end{proof}

\begin{prop2} 
For any $\varepsilon > 0$, the minimum type I error $\alpha$ attainable by $d$ goes to zero as $k$ goes to infinity.
\end{prop2}
\begin{proof} 
The minimum type I error is achieved when $\alpha_0 = 0$. The type I error of $d$ is then $\P(T > k)$ for $T \sim \mathrm{Binomial}(2k+1,  1-p)$, where $p > 1/2$ is set given $\varepsilon$ and $k$. The probability goes to zero as $k$ goes to infinity because $p > 1/2$.
\end{proof}

{{\section{Differentially private $t$-test \label{sec:ttest} }}

{{In this section, we report the result of a simulation study where we compare the performance of the subsampled-and-aggregated randomized response mechanism to the differentially private $t$-test for regression proposed in \cite{barrientos2019differentially}. This is a specific test that is only applicable to this task.}}

{{The data are simulated from the normal linear model $y = X \beta + \varepsilon$ where $\beta = [\beta_1, \beta_2, \beta_3, \beta_4, \beta_5]' = [1, 1, 1, 1, 1]'.$ We test $H_0: \beta_1 = 0$ against $H_1 = \beta_1 \neq 0$. For the method in \cite{barrientos2019differentially}, we set the number of subsets to 25 and the truncation parameter to $a = 2$, following what was proposed in \cite{barrientos2019differentially}. The number of subgroups for the other methods based on data-splitting is set using the strategy proposed in Section~\ref{sec:tuning} with $\alpha_{0, \min} = \alpha$. We consider $\alpha \in \{0.005, 0.1, 0.05, 0.1\}$, $\varepsilon \in \{0.5, 0.75, 1, 1.25, 1.5\}$ and a range of sample sizes $n$ that runs up to $10^5$. For each scenario, we perform $10^4$ simulations.}}

{{The test proposed in \cite{barrientos2019differentially} (labeled as DP t) outperforms the general-purpose algorithms in all cases except when $\varepsilon = 0.5$. The binarized sum and the average $p$-value are best in this case. The randomized response mechanism performs best when $\alpha = 0.005$ and $\varepsilon \in \{1.25, 1.5\}$.}}

\begin{figure}
\includegraphics[width=\linewidth]{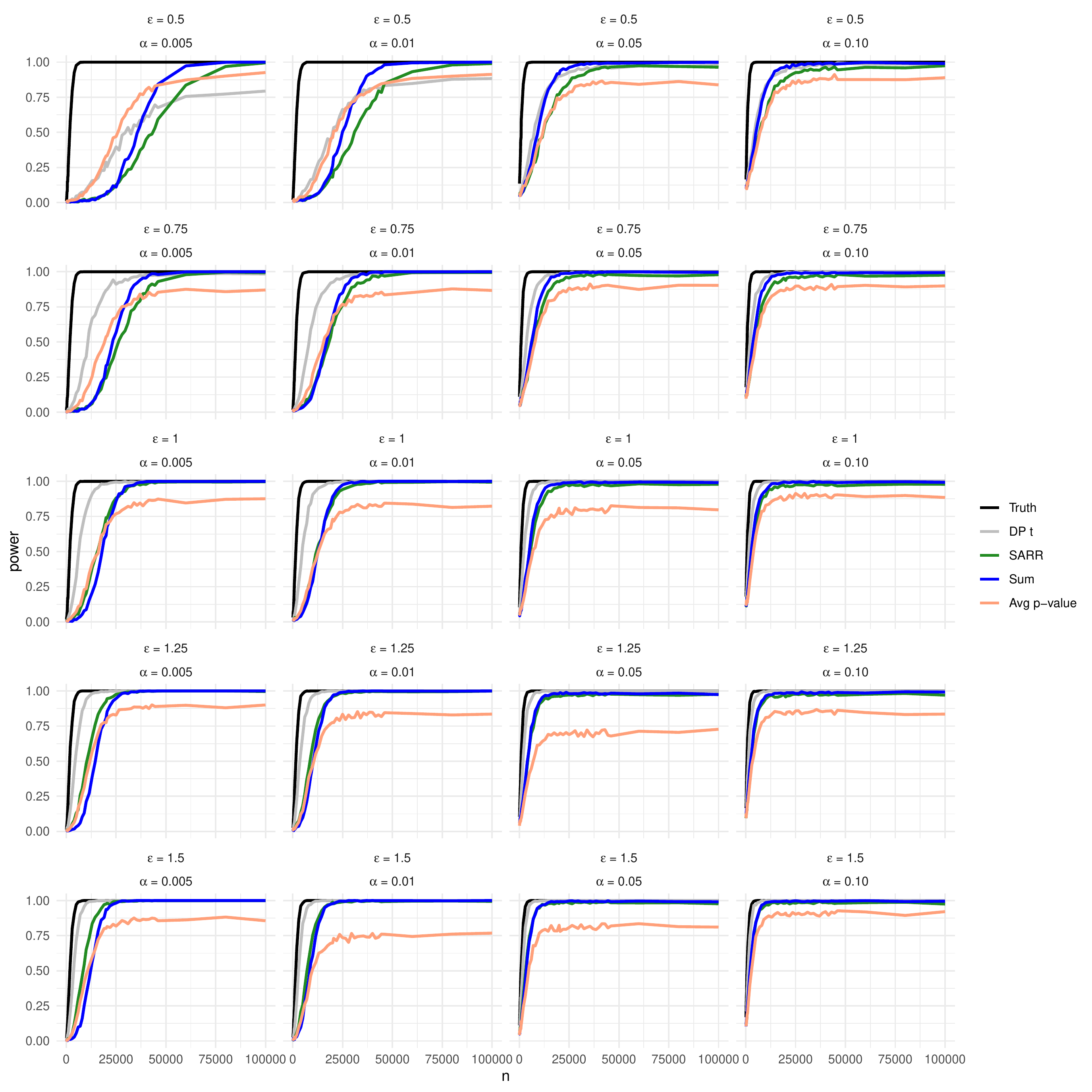}
\caption{Differentially private $t$-test: Average power of $t$-tests for regression for different combinations of $\alpha$ and $\varepsilon$ as a function of the total sample size $n$.}
\label{fig:ttest}
\end{figure}

{{\section{Goodness-of-fit: Kurtosis} \label{sec:kurtosis}}}

{{Figure~\ref{fig:kurtosis} displays the result of the test for kurtosis in Section 4.1 of the main text. Like we observed in the other scenarios, randomized response performs best when $\alpha = 0.005$ and $\varepsilon$ is greater or equal to 1. The sum performs best for high $\alpha$ and low $\varepsilon$. The average $p$-value is best when both $\alpha$ and $\varepsilon$ are small.}}

\begin{figure}
\includegraphics[width=\linewidth]{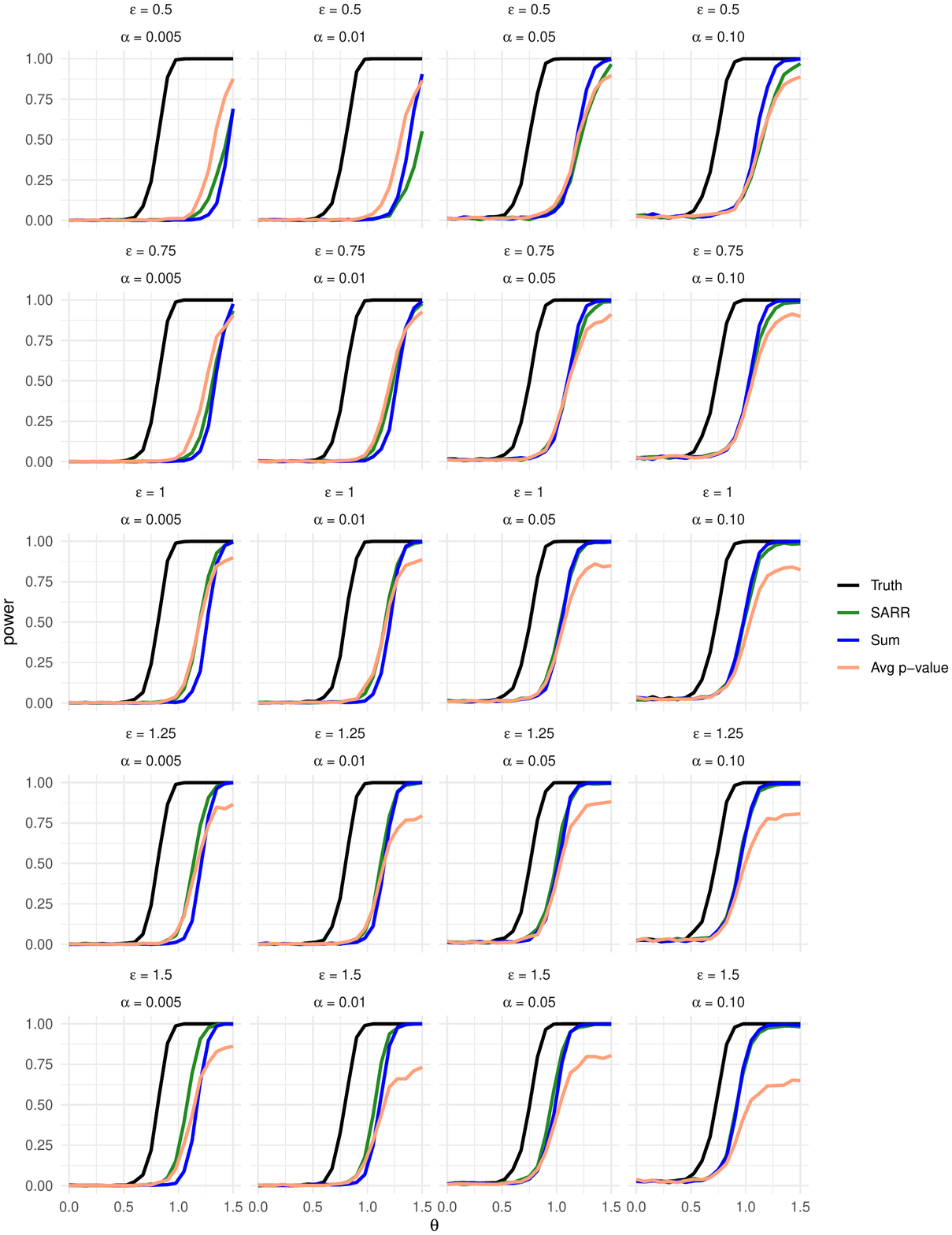}
\caption{Goodness-of-fit tests: Average power of tests for kurtosis for different combinations of $\alpha$ and $\varepsilon$.}
\label{fig:kurtosis}
\end{figure}

\end{document}